\newcommand{\nuple}[2]{{#1}_1,{#1}_2,\dots,{#1}_{#2}}
\newcommand{\parang}[1]{\langle {#1} \rangle}
\newcommand{\database}{\parang{U, {\cal R}}}
\newcommand{\AutR}{{\rm Aut}({\cal R})}
\newcommand{\AutRS}{{\rm Aut}({\cal R} \cup \{S\})}
\newcommand{\cgR}{{\rm cgr}({\cal R})}
\newcommand{\BIR}{{\rm BI}({\cal R})}
\newcommand{\BIRS}{{\rm BI}({\cal R} \cup \{S\})}
\newcommand{\stronguniform}[3]{{#2} \hookrightarrow_{#3} {#1}}
\newcommand{\CPR}{{\rm CP}({\cal R})}
\newcommand{\CPRS}{{\rm CP}({\cal R} \cup \{S\})}
\newcommand{\Imm}{Im}
\newcommand{\Imma}{\Imm}
\newcommand{\names}{names}
\newcommand{\name}{name}
\newcommand{\OPR}{{\rm OP}({\cal R})}
\newcommand{\OPRS}{{\rm OP}({\cal R} \cup \{S\})}
\newcommand{\Pow}{Pow}
\newcommand{\AM}{PD}
\newcommand{\dom}{Dom}
\newcommand{\Dom}{Domain}
\newcommand{\Inst}{Ext}
\newcommand{\BI}{BI}
\newcommand{\g}{g}
\newcommand{\f}{f}
\newcommand{\I}{\mathcal{I}}
\newcommand{\struct}{\ensuremath{\mathcal S}}
\newcommand{\PBI}[1]{P^{BI}_{#1}}
\begin{document}

\title{A Unifying Framework to Characterize the  Power of a Language to Express
  Relations\footnote{This work has been
    supported by the Italian Electrical Energy Company under research contracts
    ENEL/CRA n. R23TC0012 and R23WC0012.}}
\author{%
  Paola Bonizzoni\footnote{Dipartimento di Informatica,
    Sistemistica e Comunicazione, Universit{\`a} degli Studi di Milano -
    Bicocca, via Bicocca degli Arcimboldi 8, 20126 Milano, Italy,
    e-mail:  \{bonizzoni, leporati, mauri\}@disco.unimib.it}
  \and
  Peter J. Cameron\footnote{School of Mathematical Sciences,
    Queen Mary and Westfield College, London E1 4NS, UK, e-mail:
    p.j.cameron@qmw.ac.uk}
  \and
  Gianluca Della Vedova\footnote{Dipartimento di Statistica,
    Universit{\`a} degli Studi di Milano -
    Bicocca, via Bicocca degli Arcimboldi 8, 20126 Milano, Italy,
    e-mail:  gianluca.dellavedova@unimib.it }
  \and
  Alberto Leporati$^\dag$
  % \and
  % Dario Lucarella\footnote{CESI -- T\&D Network, Via Rubattino 54,
  %   I-20134 Milano, Italy, e-mail: lucarella@cesi.it}
  \and
  Giancarlo Mauri$^\dag$}
\date{}
\maketitle

\theoremstyle{plain}
\newtheorem*{theorem2}{General Theorem}
\newtheorem{PB}{Problem}[section]
\newtheorem{Cl}{Claim}[section]
\newtheorem{Def}{Definition}[section]
\newtheorem{Exercise}{Exercise}[section]
\newtheorem{Program}{Program}[section]
\newtheorem{Thm}{Theorem}[section]
\newtheorem{Rem}{Remark}[section]
\newtheorem{Lem}{Lemma}[section]
\newtheorem{Obs}{Observation}[section]
\newtheorem{Prop}[Thm]{Proposition}
\newtheorem{Cor}[Thm]{Corollary}
\newtheorem{Example}{Example}[section]

\begin{abstract}
  In this extended abstract we  provide a unifying framework that can be used to
  characterize and compare
  the expressive power of query languages for different data base models.
  The framework is based upon the new idea of {\em valid} partition,
  that is a partition of the elements of a
  given data base, where each class of the partition is composed by elements
  that cannot be separated (distinguished) according to some level of information
  contained in the data base.
  We describe two applications of this new framework, first by deriving
  a new syntactic characterization  of the expressive power
  of relational algebra which is equivalent to the one given by Paredaens,
  and subsequently by studying the expressive power of a simple graph-based
  data model.
\end{abstract}

\section{Introduction}

The relational data base model, introduced by Codd in~\cite{Codd1},
has been particularly successful since it is a mathematically
elegant model well suited to describe almost all ``real world''
situations. Since the query languages associated to such model (the
\emph{relational algebra} and  the \emph{relational calculus}) have a formal and
simple definition,
an interesting field of research is to study the expressive
power of such language. Codd~\cite{Codd} has proved that the relational
algebra is equivalent to the \emph{relational calculus}, in the sense that
both query languages can compute the same set of relations.

A breakthrough in this field~\cite{Banc1,Par78} has been a syntactic
characterization of the set of relations that can be computed in a give data
base. These results, also known as \emph{BP-completeness}, are based on the
principle of data independency from the physical representation: the
information that can be extracted from the data base is completely determined
at the logical level of such data base. This fact  can be stated in a simple
way: a relation $R$ can be computed from a data base $D$ if and only if all
permutations over the elements of $D$ which preserve $D$ (that is, all
permutations that produce a data base isomorphic to $D$), also preserve $R$.
An interesting interpretation of this property is that only the information
given by the structure of the data can be used to differentiate data values;
consequently, a query is expressible if and only if it does not add any
additional differentiation to the one initially available~\cite{ABGV}.

This idea can be rephrased by stating that the result of a query is invariant
w.r.t. permutations of indistinguishable values; such a permutation was
captured with the notion of automorphism in~\cite{Banc1,Par78}.
While the $BP$-criterion is a natural requirement, it refers to properties
of relations in a given data base instead of queries as a whole. We recall that
a query is an expression of the query language  that can be applied to different
data bases leading to possibly different results.
Thus it has been extended to a property of queries as partial functions from
data bases to data bases, which is known nowadays as {\em genericity}
\cite{CH80}: it has been recognized as the capability of the calculus to
preserve isomorphisms between data bases, rather than automorphisms.
Genericity is a common requirement for query languages and it is traditionally
related to the \emph{data independence} principle that assumes that the
data base is constructed over an abstract domain which is independent from
the internal representation of data.
Subsequent research has shown that this approach to the analysis of the
expressiveness of a query language has certain shortcomings
\cite{ABGV,HH96}, mainly when new data models, such as the object-based model,
are introduced.
Other notions have been proposed to analyze properties of queries in some new
models~\cite{BMS96,VVAG92,jacm/AbiteboulK98} pointing out the importance of extending
genericity to be used in more complex models.
In~\cite{BMS96} languages are classified  w.r.t. the degree of the use of the
equality predicate, by analyzing the invariance property of queries under
different mappings (not necessarily isomorphisms) over the data domain, which
are compatible with the relational structure of the data base.

Subsequent advances in data base theory have led to different models that
take into account the limitations of the relational model when it comes to
describe complex situations.
Most of such models have been introduced in the graph-based or object-oriented
frameworks, but usually their mathematical foundations do not allow a complete
study of the expressive power of the query languages introduced.
In fact, to our knowledge, the only exception is the graph-based model GOOD
\cite{AP91}.

% A new framework of data base model, called
% \emph{semistructured}~\cite{Bun97}, has become popular recently,
% mainly due to
% the astonishing growth of the amount of data available on the
% Web.
% Since  a notion of
% indistinguishability based on a global property, such as the one given
% in~\cite{Par78,Banc1}, is not suitable to model the dynamic nature of
% the Web, it is necessary to introduce a different notion based on
% a property that can be tested locally. As a consequence
% BP-completeness does not seem adequate to give a characterization of
% the set of queries computable in non-relational models.

In this paper we introduce a different syntactic characterization of queries
computable in a data base.
Our characterization relies upon the notion of partitions of the domain, where
each partition represents a level of {\em undifferentiation} among objects,
values or vertices. Notice that an automorphism also can represent a certain
level of undifferentiation.
Initially we will exploit such notion to give two new characterizations
of relations expressible in a relational data base. Subsequently, we will show
how to apply the new framework to analyze a simple graph-based model, hence
proving that our characterization can be useful in comparing the expressive
power of different data languages.

Following the approach of~\cite{Par78}, the data models studied in this paper
are domain-preserving, that is, it is not
possible to create new vertices or values, but only to query an existing data
base.
In our framework, a binary relation over sets of data values is defined,
denoted by $\hookrightarrow$, which relates those sets of values that cannot
be differentiated.
From the relation $\hookrightarrow$ we build some sets of partitions that
\emph{respect} $\hookrightarrow$, that is, all classes in a partition are
preserved by $\hookrightarrow$.
We prove that expressiveness of a query language can be stated as the
conservation of some of those partitions, where the exact set of partitions that
must be preserved depends on the data model.
The expressibility results we obtain have the following form: \emph{Given a
  data base $D$, let $S$ be a relation or a graph over the domain set of $D$.
  Then $S$ can be expressed in $D$ if and only if ${\rm P}({D})={\rm P}({D}
  \cup \{ S \})$}, where ${\rm P}({D})$ and ${\rm P}({D} \cup \{ S \})$ are two
sets of partitions which depend on the model under consideration.

\section{Preliminaries}

All sets considered in this paper are assumed to be finite and nonempty.
Given a set $U$, a \emph{relation} $R$ over $U$ is a subset of the cartesian
product $U^a=U \times U \times \cdots \times U$ ($a$ times) for some fixed
integer $a > 0$, that is a set of tuples of length $a$,
where all components of a tuple are elements of $U$.
The  number $a\in\mathbb{N}$ is called the \emph{order} or \emph{arity} of the
relation.
Given a set ${\cal R} = \{\nuple{R}{p}\}$ of relations over $U$, the pair
$\database$ is called a \emph{relational database}; in this setting, $U$ is
the \emph{domain} of the database, and ${\cal R}$ is the set of relations of
the database.

Given a relation $R \in {\cal R}$ of a database $\database$, we denote with
$D(R)$ the \emph{data domain} of $R$, that is the subset of the elements of
the database domain $U$ that are in at least one tuple of $R$.
The notion of data domain is easily extended to the set ${\cal R}$ of relations
as the set union of relations' data domains: $D({\cal R}) = \bigcup_{R \in
  {\cal R}}D(R)$.
Without loss of generality, we can assume that $D({\cal R}) = U$ for every
considered database  $\database$.
This seemingly trivial requirement is indeed very important, as it will become
evident after Theorem~\ref{teo:Paredaens2}, therefore we will omit the universe
set unless it is necessary to avoid any ambiguities.

Just as in~\cite{Par78}, when referring to a relational database, we use the
\emph{relational algebra}
as a query language.
In relational algebra two binary operators (union and product) and three
unary operators (projection, equality restriction and inequality restriction)
are given.
In the following definition all relations are
defined over the same database domain $U$.

\begin{Def}[Relational Algebra]
  \label{def:relational-algebra}
  Let $R$ and $S$ be two relations with the same arity; the \emph{union} of $R$
  and $S$, denoted by $R \cup S$, is simply the set--theoretical union of the
  two sets of tuples.

  Given two relations $R$ and $S$ (not necessarily with the same arity), the
  (cartesian) \emph{product} of $R$ and $S$, denoted by $R \times S$, is the set
  of all possible concatenations of a tuple of $R$ with a tuple of $S$:
  $   \{r \cdot s \, | \, r \in R, \, s \in S\}$.
  The abbreviation $R^k$ is used to express the relation $R \times \cdots \times
  R$ ($k$ times).

  Let $m$ be the arity of a relation $R$, $q \le m$ a positive integer and
  $f : \{1,$ $\ldots, q\} \to \{1, \ldots, m\}$ a function.
  The \emph{projection} of $R$ over $(f(1), \ldots, f(q))$, denoted by
  $R\pi\big(f(1), \ldots, f(q)\big)$, is the relation:
  $\big\{(r_{f(1)},\ldots, r_{f(q)}) \, : \, (r_1, \ldots, r_m) \in R)\big\}$.

  Now, let $j_1$ and $j_2$ be two integers such that $1 \leq j_1, j_2 \leq m$,
  where $m$ is the arity of a relation $R$.
  The \emph{equality restriction} of $R$ on $j_1$ and $j_2$ is the relation,
  denoted by $R \, | \, j_1 = j_2$, that is obtained by taking from $R$ all the
  tuples for which the $j_1$-th and the $j_2$-th components are equal:
  $\big\{(r_1, \ldots, r_m) \in R \, : \, r_{j_1} = r_{j_2}\big\}$.
  Analogously, the \emph{inequality restriction} of $R$ on $j_1$ and $j_2$,
  denoted by $R \, | \, j_1 \neq j_2$, is the relation obtained by taking
  from $R$ all the tuples for which the $j_1$-th and the $j_2$-th components are
  different: $\big\{(r_1, \ldots, r_m) \in R \, : \, r_{j_1} \neq r_{j_2}\big\}$.
\end{Def}

The five operations just described are sufficient to generate the operations
of intersection, difference, join and division, usually assumed as primitives
in Codd's relational algebra; a proof of this fact can be found, for example,
in~\cite{Codd}.

Given a relational data base $D=\database$, we will denote by $M_E(D)$ the
relation which is the result of applying the expression (of the relational
algebra) E to the data base $D$. Moreover a relation $S$ over $U$ is told to
be \emph{expressible} from ${\cal R}$ if there exists an expression $E$ whose
operands are all relations in ${\cal R}$, and such that $M_E(D)$ is equal to
$S$.
Following~\cite{Par78}, we denote with $\BIR$ (basic information
contained in the set of relations ${\cal R}$) the set of relations that can
be expressed from ${\cal R}$.

As  observed in~\cite{Par78}, $\BIR$ is the set of the answers to all
possible queries that can
be asked  to
a relational datMabase that contains the relations ${\cal R}$.
In~\cite{Par78}, Paredaens gives a characterization of the class $\BIR$
based upon appropriate automorphisms, that is permutations of the elements of
the database domain.

Let $R$ be a relation of order $m$ over a set $U$.
As in~\cite{Par78}, an \emph{automorphism} is a bijective function (that
is, a permutation) on $U$.
We say that the automorphism $\psi: U \to U$ \emph{respects} the relation $R$
or, equivalently, that $\psi$ is $R$-\emph{compatible} if, for each tuple
$(\nuple{a}{m}) \in U^m$, $(\nuple{a}{m}) \in R \Longrightarrow
(\psi(a_1), \psi(a_2), \ldots, \psi(a_m)) \in R$.

The compatibility of an automorphism $\psi: U \to U$ with respect to a
relation $R$ can be naturally extended to a set ${\cal R}$ of relations in
the following way: $\psi$ \emph{respects} the relations in ${\cal R}$ or,
equivalently, $\psi$ is ${\cal R}$-\emph{compatible} if $\psi$ is
$R$-compatible for each relation $R$ in ${\cal R}$. Notice that the set of
automorphisms ${\cal R}$-\emph{compatible}, is a group\footnote{a group consists
  of a set $G$ of elements, a binary associative operation  on $G$, and an identity
  element $1_G\in G$, such that the operation is closed and invertible in $G$}
where the operation is
the composition of functions and the identity is the identity function (i.e. the
function defined as $f(x)=x$).
As in~\cite{Par78}, we denote with $\AutR$ the set of all the
automorphisms $\psi: U \to U$ which are ${\cal R}$-compatible; with a small
abuse of notation, if ${\cal R} = \{ R \}$, we will usually write
${\rm Aut}(R)$ instead of ${\rm Aut}(\{R\})$.
It will be very useful to consider the following representation of $\AutR$.

\begin{Def}
  Let $\database$ be a relational database, with $U = \{ d_1, d_2, \ldots$,
  $d_n \}$, and
  let $\AutR = \{\nuple{\psi}{l}\}$ be the set of ${\cal R}$-compatible
  automorphisms.
  The following relation of arity $n$:
  \begin{equation*}
    \cgR = \begin{tabular}{ccc}
      $\psi_1(d_1)$ & $\cdots$ & $\psi_1(d_n)$ \\
      $\vdots$ & $\ddots$ & $\vdots$ \\
      $\psi_l(d_1)$ & $\cdots$ & $\psi_l(d_n)$
    \end{tabular}
  \end{equation*}
  is called the \emph{cogroup--relation} of $\database$.
\end{Def}

As we can see, each row (tuple) of the relation $\cgR$ represents one of the
${\cal R}$-compatible automorphisms.
Since we do not associate any particular meaning to the elements of the domain
$U$, if $|U| = n$ we can assume, without loss of generality, $U = \{ 1, 2,
\ldots, n \}$.
We can also assume that the first tuple of $\cgR$ represents the identity
function on $U$ (which is always present in $\AutR$, since it is compatible
with every nonempty set of relations); as a consequence, it can always be
assumed that the first row of $\cgR$ is the tuple $(1, 2, \ldots, n)$.

\begin{Example}
  Let $\database$ be a relational data base, with:
  \begin{itemize}
  \item $U = \{1, 2, 3, 4\}$
  \item ${\cal R} = \{R_1, R_2, R_3\}$, with:
    \begin{displaymath}
      R_1 = \, \begin{tabular}{cc}
        $1$ & $2$ \\
        $2$ & $1$ \\
        $3$ & $4$ \\
        $4$ & $3$
      \end{tabular}
      \qquad
      R_2 = \, \begin{tabular}{cc}
        $1$ & $3$ \\
        $3$ & $1$ \\
        $2$ & $4$ \\
        $4$ & $2$
      \end{tabular}
      \qquad
      R_3 = \, \begin{tabular}{cc}
        $1$ & $4$ \\
        $4$ & $1$ \\
        $2$ & $3$ \\
        $3$ & $2$
      \end{tabular}
    \end{displaymath}
  \end{itemize}
  It is easily verified that:
  \begin{displaymath}
    {\rm Aut}(\{R_1, R_2\}) = {\rm Aut}(\{R_1, R_3\}) = {\rm Aut}(\{R_2, R_3\})
    = {\rm Aut}(\{R_1, R_2, R_3\})
  \end{displaymath}
  \begin{displaymath}
    \cgR = \begin{tabular}{cccc}
      $1$ & $2$ & $3$ & $4$ \\
      $2$ & $1$ & $4$ & $3$ \\
      $3$ & $4$ & $1$ & $2$ \\
      $4$ & $3$ & $2$ & $1$
    \end{tabular}
  \end{displaymath}
  If we look at the ${\cal R}$-compatible automorphisms as permutations over
  $U$, we can express $\AutR$ as follows:
  \begin{displaymath}
    \AutR = \begin{tabular}{l}
      {\rm Identity} \\
      {\rm ($1$ $2$) ($3$ $4$)} \\
      {\rm ($1$ $3$) ($2$ $4$)} \\
      {\rm ($1$ $4$) ($2$ $3$)}
    \end{tabular}
  \end{displaymath}
  \label{ex:Klein-group}
\end{Example}

It is not difficult to see that, for a given database $\database$, the set
$\AutR$ of ${\cal R}$-compatible automorphisms is indeed a group with respect
to function composition, with the identity function over $U$ as unitary
element.
In fact, the identity over $U$ is always in $\AutR$, the inverse of an
${\cal R}$-compatible automorphism is still an ${\cal R}$-compatible
automorphism, and the composition between two ${\cal R}$-compatible
automorphisms is again an ${\cal R}$-compatible automorphism.
Since we can always assume $U = \{ 1, 2, \ldots, n \}$, we can think of
$\AutR$ as a finite permutation group over the set $\{ 1, 2, \ldots, n \}$,
that is a subgroup of the symmetric group $S_n$.

In this paper we investigate the relation between expressive power and
partitions of the database domain.
More precisely, we investigate the possibility to characterize the expressive
power of relational and graph-based databases via one or more theorems abiding
to the following meta theorem.
\begin{Thm}[Meta theorem]
  \label{thm:scheme1}
  Let $\database$ be a relational database, and let $S$ be a relation over
  $U$.
  Then
  $S \in \BIR \iff {\rm P}({\cal R}) = {\rm P}({\cal R} \cup \{ S \} )$,
  where ${\rm P}({\cal R})$ and ${\rm P}({\cal R} \cup \{ S \})$ are sets of
  partitions over $U$, built from the sets ${\cal R}$ and ${\cal R} \cup
  \{ S \}$ of relations respectively.
\end{Thm}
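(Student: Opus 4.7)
The plan is to view this statement as a template whose content is a choice of the operator $P$ that makes it true, and to prove it by reducing to Paredaens' theorem. Paredaens has established that $S \in \BIR$ iff every $\psi \in \AutR$ is $S$-compatible; since $\AutRS \subseteq \AutR$ holds automatically, this is equivalent to $\AutR = \AutRS$. Hence it suffices to exhibit a $P$ for which $P({\cal R}) = P({\cal R} \cup \{S\})$ iff $\AutR = \AutRS$.

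For the candidate $P$ I would take the family of orbit partitions. The group $\AutR$ acts coordinatewise on $U^k$, its orbits form a partition $\Pi_k({\cal R})$ of $U^k$, and I set $P({\cal R}) = \{\Pi_k({\cal R})\}_{k \ge 1}$. The forward implication is then immediate: if $S \in \BIR$ then $\AutR = \AutRS$, the two groups induce the same orbits on every $U^k$, and so $P({\cal R}) = P({\cal R} \cup \{S\})$.

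For the converse I would focus on level $k = n = |U|$. By the definition of the cogroup-relation, $\cgR$ is precisely the $\AutR$-orbit of $(1, 2, \dots, n)$, and reading off its rows recovers the whole group (each row is $(\psi(1), \dots, \psi(n))$ for some $\psi \in \AutR$, and conversely every such tuple determines a unique permutation). Hence equality of $\Pi_n({\cal R})$ with $\Pi_n({\cal R} \cup \{S\})$ forces $\cgR = \cgRS$, i.e.\ $\AutR = \AutRS$, and another appeal to Paredaens gives $S \in \BIR$.

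The genuine obstacle is not this chain of implications, which is essentially forced once the orbit interpretation is in place, but the selection of $P$ itself: the meta-theorem is only useful to the extent that the partitions can be defined intrinsically from the data, via a relation such as $\hookrightarrow$ sketched in the introduction, rather than directly via the permutation group, and that the same definition continues to characterize expressibility for the graph-based model treated later, where automorphism-based criteria are known to be too coarse. The real work, then, lies in choosing $P$ so as to simultaneously match $\AutR$ in the relational case and adapt naturally to richer models.
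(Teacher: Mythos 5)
Your reduction to Paredaens' theorem (via $S \in \BIR \iff \AutR = \AutRS$) is exactly the paper's first step (Theorem~\ref{teo:Paredaens2}), and your closing remark that the substance of the meta theorem lies in the choice of $P$ is on target. But your concrete choice of $P$ does not satisfy the statement you are proving: the theorem requires ${\rm P}({\cal R})$ to be a set of partitions \emph{over $U$}, whereas your $\Pi_k({\cal R})$ are partitions of the tuple spaces $U^k$. With that choice the converse direction becomes tautological rather than a characterization: the $\Pi_n$-class of $(1,2,\dots,n)$ is literally the set of rows of $\cgR$, i.e.\ the group itself written out as a relation, so ``same orbit partition of $U^n$'' is merely a restatement of ``same automorphism group''.

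The content the paper actually establishes, and which your argument bypasses, is that partitions of the domain $U$ alone suffice. Concretely, the paper instantiates $P$ as $\OPR$ (partitions of $U$ into orbits of subgroups of $\AutR$) and as $\CPR$ (partitions of $U$ into cycles of single elements of $\AutR$), and the nontrivial step is Theorem~\ref{teo:orbit-partitions}: if $H \le G \le S_n$ and $H$ and $G$ induce the same orbit partitions of $\{1,\dots,n\}$, then $H = G$. This is proved by induction on $n$: the partitions having $\{n\}$ as a class are the orbit partitions of the stabilizers $G_n$ and $H_n$, the inductive hypothesis gives $G_n = H_n$, and Lagrange's theorem together with $|Gn| = (G:G_n)$ yields $|G| = |H|$, hence $G = H$. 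A parallel argument (Lemma~\ref{lem:orbits-from-cycles} and Theorem~\ref{teo:cycle-partitions}) handles cycle partitions. Without a step of this kind your proof establishes a different, essentially vacuous instance of the schema, and misses the point that domain-level indistinguishability already determines expressibility.
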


\section{Expressiveness in Relational Databases}

The relevance of the main result in~\cite{Par78} is that it is the first
syntactic characterization of the relations that can be obtained from a given
database $\database$ when the relational algebra is used as a query language.
More precisely, in~\cite{Par78} the following theorem is proved.

\begin{Thm}
  Let $\database$ be a relational database, and let $S$ be a relation over
  $U$.
  Then
  $S \in \BIR \iff \AutR \subseteq {\rm Aut}(S)$ and $D(S) \subseteq D({\cal R})$.
  \label{teo:Paredaens}
\end{Thm}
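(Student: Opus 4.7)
The proof separates into the two implications. For $(\Rightarrow)$, my plan is a structural induction on the relational algebra expression $E$ with $S = M_E(D)$. In the base case $E = R_i \in {\cal R}$, both $\AutR \subseteq {\rm Aut}(R_i)$ and $D(R_i) \subseteq D({\cal R})$ are immediate. In the inductive step I would verify each of the five operations of Definition~\ref{def:relational-algebra}: a permutation that respects both operands of a union or of a cartesian product respects their combination, and a permutation that respects the operand of a projection, an equality restriction, or an inequality restriction respects the result too, since these operations are defined purely in terms of coordinates and their (in)equalities. The inclusion $D(S) \subseteq D({\cal R})$ is preserved by each operation because every element occurring in a tuple of the output already occurs in a tuple of some operand.

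The non-trivial direction is $(\Leftarrow)$. The core observation is that $\AutR$ acts coordinatewise on $U^m$, where $m$ is the arity of $S$, and the hypothesis $\AutR \subseteq {\rm Aut}(S)$ says exactly that $S$ is a union of $\AutR$-orbits on $U^m$. Moreover, for any tuple $s = (s_1, \ldots, s_m) \in U^m$ one has
\[
\mathrm{Orb}_{\AutR}(s) = \cgR \, \pi(s_1, \ldots, s_m),
\]
directly from the definition of the cogroup-relation. Hence, once it is known that $\cgR \in \BIR$, the theorem follows by writing
\[
S = \bigcup_{s \in S} \cgR \, \pi(s_1, \ldots, s_m),
\]
which is a legitimate relational-algebra expression since $U$, and therefore $S$, is finite.

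The main obstacle is thus the auxiliary claim that $\cgR$ itself is expressible from ${\cal R}$. My plan is to start from a sufficiently large product $T$ of copies of the relations in ${\cal R}$ and to carve $\cgR$ out of $T$ using equality and inequality restrictions. The hypothesis $D({\cal R}) = U$ is crucial here: it guarantees that every element $d \in U$ occurs in some tuple of some $R_i$, so with enough copies of suitable relations one can arrange, via equality restrictions, that each of the $n = |U|$ elements of $U$ is pinned to a designated column of $T$. Further equality restrictions then force the block of $n$ designated columns to define a well-defined map $U \to U$, inequality restrictions impose injectivity, and the product structure built from relations in ${\cal R}$ enforces ${\cal R}$-compatibility; a final projection onto those $n$ columns yields $\cgR$. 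The delicate part will be verifying that the chosen restrictions neither drop any ${\cal R}$-compatible automorphism nor admit any non-automorphism, a tuple-level bookkeeping argument that ultimately leans on the closure of $\AutR$ under composition and on the group-theoretic symmetry of the construction.
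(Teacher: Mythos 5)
The paper does not actually prove this statement: it is imported verbatim from Paredaens~\cite{Par78} (``in~\cite{Par78} the following theorem is proved''), so there is no in-paper proof to compare against. Your proposal is, however, a sound reconstruction of Paredaens' original argument. The forward direction by structural induction on the five primitives is routine and correct (note that since $\psi$ is a bijection of a finite $U$, the one-sided implication in the paper's definition of ``respects'' already forces $\psi$ to map each relation \emph{onto} itself, which is what the inequality restriction and the induction need). Your reduction of the converse to the single claim $\cgR \in \BIR$ is exactly the pivot of the classical proof, and it is precisely the fact this paper later quotes from~\cite{Par78} and builds on in Theorem~\ref{teo:AutR=cgR}; the orbit decomposition $S = \bigcup_{s \in S} \cgR\,\pi(s_1,\ldots,s_m)$ is correct (projection indices may repeat, and the paper's blanket nonemptiness assumption disposes of $S=\emptyset$). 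Your sketch of the construction of $\cgR$ --- product of each $R_i$ with itself once per tuple, equality/inequality restrictions keyed to the ``identity'' mega-tuple, final projection onto $n$ designated columns --- is the right one, and $D({\cal R})=U$ is indeed where it is needed. One small correction to your closing remark: the verification that a surviving mega-tuple encodes a genuine automorphism does not lean on closure of $\AutR$ under composition; it leans on finiteness, namely that an injective map sending each finite $R_i$ into itself must send it onto itself. With that bookkeeping written out, the proof is complete and is essentially the one the paper delegates to~\cite{Par78}.
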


Basically, Paredaens has been able to point out the fundamental relation
between expressiveness in a database and the set of automorphisms in the
relational model.
Such result has been successively extended in~\cite{CH80} to define in a
formal way the notion of \emph{genericity}, that is computable queries
\cite{CH80} have to
be invariant with respect to the isomorphisms between databases.
We can restate Theorem~\ref{teo:Paredaens} in a form that will be more
convenient for our purposes.
\begin{Thm}
  Let $\database$ be a relational database, and let $S$ be a relation over
  $U$.
  Then $S \in \BIR \iff \AutR = \AutRS$.
  \label{teo:Paredaens2}
\end{Thm}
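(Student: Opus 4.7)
The plan is to derive this reformulation directly from Theorem~\ref{teo:Paredaens}, using the standing assumption, recalled just before the meta theorem, that $D({\cal R}) = U$ for every database under consideration.

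First I would establish the elementary fact $\AutRS = \AutR \cap \Aut(S)$: by definition a permutation is $({\cal R} \cup \{S\})$-compatible iff it is $R$-compatible for every $R \in {\cal R}$ and also $S$-compatible. From this, $\AutRS \subseteq \AutR$ holds for free, so the equality in the statement reduces to the single inclusion $\AutR \subseteq \Aut(S)$.

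For the forward direction, assume $S \in \BIR$. Theorem~\ref{teo:Paredaens} gives $\AutR \subseteq \Aut(S)$. Combined with $\AutR \subseteq \AutR$ this yields $\AutR \subseteq \AutR \cap \Aut(S) = \AutRS$, and together with the trivial reverse inclusion we get $\AutR = \AutRS$.

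For the converse, assume $\AutR = \AutRS$. Then $\AutR = \AutR \cap \Aut(S) \subseteq \Aut(S)$. The second hypothesis required by Theorem~\ref{teo:Paredaens}, namely $D(S) \subseteq D({\cal R})$, is automatic: $S$ is a relation over $U$, so $D(S) \subseteq U$, and by our standing convention $U = D({\cal R})$. Applying Theorem~\ref{teo:Paredaens} then gives $S \in \BIR$.

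The only subtle point, and the place where care is needed, is precisely this use of the convention $D({\cal R}) = U$; without it the reverse direction fails, since one could cook up a relation $S$ that mentions elements of $U$ not appearing in any tuple of any $R \in {\cal R}$ while still satisfying $\AutR = \AutRS$. This is in fact the phenomenon that the paragraph after Definition~\ref{def:relational-algebra} flagged as the reason the convention matters. Beyond this observation, no real obstacle arises; the argument is essentially bookkeeping on top of Paredaens' original theorem.
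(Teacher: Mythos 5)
Your proof is correct, but it follows a genuinely different route from the paper's. You work entirely at the level of automorphism groups: you observe that $\AutRS = \AutR \cap {\rm Aut}(S)$, so that the equality $\AutR = \AutRS$ collapses to the single inclusion $\AutR \subseteq {\rm Aut}(S)$, and then you apply Theorem~\ref{teo:Paredaens} in each direction, noting that the domain condition $D(S) \subseteq D({\cal R})$ is automatic under the standing convention $D({\cal R}) = U$. The paper instead first proves the purely syntactic equivalence $S \in \BIR \iff \BIR = \BIRS$ (by a substitution argument on expressions), recasts this as saying that the databases $\database$ and $\parang{U, {\cal R} \cup \{S\}}$ are \emph{basic information equivalent}, and then invokes, as a consequence of Theorem~\ref{teo:Paredaens}, the fact that two databases over the same data domain are basic information equivalent iff their automorphism groups coincide. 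Your argument is shorter and more transparent --- the intersection identity makes the reduction to Paredaens' theorem immediate --- whereas the paper's detour buys the intermediate fact $S \in \BIR \iff \BIR = \BIRS$, which fits the paper's guiding theme of characterizing expressibility as an invariant of ${\cal R}$ versus ${\cal R} \cup \{S\}$. Both arguments hinge on the same two ingredients (Theorem~\ref{teo:Paredaens} and the convention $D({\cal R}) = U$), and you correctly isolate the latter as the point where the converse would otherwise fail, in agreement with the paper's remark following the theorem.
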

\begin{proof}
  First of all, we show that $S \in \BIR \iff \BIR = \BIRS$.
  Proving that $S \in \BIR \Longrightarrow \BIR = \BIRS$
  is trivial as $\BIR \subseteq \BIRS$. The latter stems from the fact that the
  relations which are expressible from ${\cal R}$ are those
  obtained from ${\cal R} \cup \{S\}$ simply ignoring the relation $S$.
  Let now be $S \in \BIR$ and $T \in \BIRS$.
  If the expression that gives $T$ from ${\cal R} \cup \{S\}$ does contain
  some occurrence of the relation $S$, it is sufficient to replace such
  occurrence with the expression that gives $S$ from ${\cal R}$ to conclude that
  $T \in \BIR$, and thus $\BIRS \subseteq \BIR$.
  It is immediate to notice that $\BIR = \BIRS$ implies $S \in \BIRS$.

  Since we have  established that  $S \in \BIR \iff \BIR = \BIRS$,  the two
  databases $\database$ and $\parang{U, {\cal R} \cup \{S\}}$ are \emph{basic
    information equivalent} -- that is, every relation of the first database
  can be obtained from the relations of the second database and vice versa --
  if and only if $S$ is expressible from $\database$.
  A direct consequence of Theorem~\ref{teo:Paredaens} is that two databases
  $\parang{U, {\cal R}_1}$ and $\parang{U, {\cal R}_2}$ are basic information
  equivalent if and only if $D({\cal R}_1) = D({\cal R}_2)$ (which are assumed
  to be both equal to $U$) and ${\rm Aut}({\cal R}_1) = {\rm Aut}({\cal R}_2)$;
  thus, we can conclude that $S \in \BIR \iff \AutR = \AutRS$
  as stated.
\end{proof}

We observe that, given our assumption that $U = D({\cal R})$, in Theorem
\ref{teo:Paredaens2} we can get rid of the inclusion between the domains, since
it is implicit from the fact that $S$ is a relation over $U$.
On the other hand, we {\em cannot} ignore the inclusion condition if we
suppose that $D({\cal R}) \subset U$, since in such a situation it is not
difficult to show two relations $R$ and $S$ such that ${\rm Aut}(R) =
{\rm Aut}(\{R, S\})$ but $S \not\in {\rm BI}(R)$.

% \begin{Example}
%   Let $\database$ be a relational database, and let $S$ be a relation over $U$,
%   where:
%   \begin{itemize}
%   \item $U = \{1,2,3\}$
%   \item $R$ and $S$ are defined as:
%     \begin{displaymath}
%       R = \, \begin{tabular}{cc}
%         $1$ & $2$ \\
%         $2$ & $1$
%       \end{tabular}
%       \qquad
%       S = \, \begin{tabular}{c}
%         $1$ \\
%         $2$ \\
%         $3$
%       \end{tabular}
%     \end{displaymath}
%   \end{itemize}
%   We have $D(R) = \{1,2\} \neq U$, and obviously $S \not\in {\rm BI}(\{R\})$.
%   On the other hand,
%   \begin{equation*}
%     {\rm Aut}(\{R,S\}) = \{{\rm Identity}, (1 \; 2)\} = {\rm Aut}(R)
%   \end{equation*}
%   which contradicts Theorem~\ref{teo:Paredaens2}.
% \end{Example}

A notion that seems tightly related to the expressiveness of relations in a
database is that of \emph{indistinguishability} between elements of the
domain.
Intuitively, the idea is that the elements of a subset of the domain of a
given database are indistinguishable if and only if no query to the database
is able to divide the set in two parts, one made of the elements that occur in
the relation resulting from the query and the other made of the elements that
do not occur in the relation.
In such a situation, we say that the set of indistinguishable elements cannot
be \emph{separated} by any of the queries that can be presented to the
database.
Thus, a relation resulting from a query to the database can only contain all
or none of the elements of a non-separable set.

Theorem~\ref{thm:scheme1} defines the general framework we propose to
investigate the expressive power of query languages.
In this framework different notions of expressible queries can be studied by
considering different sets of partitions.
For a given database $\database$, we say that a set  ${\rm P}({\cal R})$ of
partitions of $U$ is a set of \emph{valid partitions} if and
only if it satisfies Theorem~\ref{thm:scheme1}.
By the results  in~\cite{Par78},
it seems to us quite natural to define the following sets of valid partitions,
namely the \emph{orbit partitions} and the \emph{cycle partitions}; indeed
later we will be able to prove that, in the context of
Theorem~\ref{thm:scheme1},   they are equivalent to the characterization of
relations obtainable in a relational data base of~\cite{Par78}.

\begin{Def}
  Let $\database$ be a relational database, and let ${\cal P} = \{ P_1, P_2$,
  $\ldots, P_k \}$ be a partition of $U$.
  ${\cal P}$ is an \emph{orbit partition} of $U$ with respect to ${\cal R}$
  if both the following conditions hold:
  \begin{enumerate}
  \item for each relation $R \in {\cal R}$ and for each class $P_i \in
    {\cal P}$, $P_i \cap D(R) = \emptyset$ or $P_i \subseteq D(R)$;
  \item for each class $P_i \in {\cal P}$ and for each pair $a_1, a_2$
    of elements of $P_i$ there exists an automorphism $\phi \in \AutR$
    such that $\phi(a_1) = a_2$, and $\phi(P_j) = P_j$ for every
    class $P_j \in {\cal P}$.
  \end{enumerate}
  We denote with $\OPR$ the set of all orbit partitions of the given
  database $\database$.
  \label{def:orbit-partition}
\end{Def}

\begin{Def}
  Let $\database$ be a relational database, and let ${\cal P} = \{ P_1, P_2$,
  $\ldots, P_k \}$ be a partition of $U$.
  ${\cal P}$ is a \emph{cycle partition} of $U$ with respect to ${\cal R}$
  if both the following conditions hold:
  \begin{enumerate}
  \item for each relation $R \in {\cal R}$ and for each class $P_i \in
    {\cal P}$, $P_i \cap D(R) = \emptyset$ or $P_i \subseteq D(R)$;
  \item there exists an automorphism $\phi \in \AutR$ such that for each
    class $P_i \in {\cal P}$ and for each pair $a_1, a_2$ of elements
    of $P_i$ there exists an integer $n$ such that $\phi^n(a_1) = a_2$
    and $\phi(P_j) = P_j$ for every class $P_j \in {\cal P}$.
  \end{enumerate}
  We denote with $\CPR$ the set of all cycle partitions of the given
  database $\database$.
  \label{def:cycle-partition}
\end{Def}
As already stated for $\AutR$, if $R$ is a relation we will write ${\rm OP}
(R)$ and ${\rm CP}(R)$ instead of ${\rm OP}(\{R\})$ and ${\rm CP}(\{R\})$
respectively.

The following theorem is an alternative formulation of the main result
of~\cite{Par78} (the equivalence of the two formulation follows from Theorem
\ref{teo:Paredaens2}) which is more useful for our purposes.

\begin{Thm}
  Let $\database$ be a relational database, and let $S$ be a relation over
  $U$.
  Then
  $\AutR = \AutRS \iff {\rm P}({\cal R}) = {\rm P}({\cal R} \cup \{ S \} )$.
\end{Thm}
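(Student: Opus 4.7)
The statement asserts a two-way equivalence, and my plan is to prove the orbit-partition case $\OPR = \OPRS$ in detail and then adapt the argument to the cycle-partition case $\CPR = \CPRS$ via a lattice-theoretic observation.

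For the direction $\AutR = \AutRS \Rightarrow \OPR = \OPRS$, the inclusion $\OPRS \subseteq \OPR$ is immediate because condition~1 in Definition~\ref{def:orbit-partition} quantifies over strictly more relations on the right, while condition~2 depends only on the common automorphism group. For the reverse inclusion I would take $\mathcal{P} \in \OPR$ and verify the extra constraint involving $D(S)$: whenever $a \in P_i \cap D(S)$, condition~2 yields some $\phi \in \AutR = \AutRS$ with $\phi(a) = b$ for any chosen $b \in P_i$; since $\phi$ respects $S$ it preserves $D(S)$, forcing $b \in D(S)$ and hence $P_i \subseteq D(S)$.

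For the converse $\OPR = \OPRS \Rightarrow \AutR = \AutRS$, the trivial inclusion is $\AutRS \subseteq \AutR$. To obtain the reverse I would prove by induction on $k$ that $\AutR$ and $\AutRS$ induce the same orbit decomposition on $U^k$; at $k = |U|$ the orbit of the tuple $(d_1, \ldots, d_n)$ is exactly the cogroup relation $\cgR$, whose rows biject with group elements, so equality of orbits yields $|\AutR| = |\AutRS|$ and hence $\AutR = \AutRS$. The engine of the induction is, for each $k$-tuple $(a_1, \ldots, a_k)$, the orbit partition $\Omega_{a_1,\ldots,a_k}$ of the pointwise stabilizer $\AutR_{a_1,\ldots,a_k}$ acting on $U$. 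Checking $\Omega_{a_1,\ldots,a_k} \in \OPR$ is routine: condition~1 holds because stabilizer orbits refine $\AutR$-orbits, which either sit inside each $D(R)$ or are disjoint from it; condition~2 is witnessed by the stabilizer itself, acting transitively on each of its orbits and preserving each setwise. The hypothesis then gives $\Omega_{a_1,\ldots,a_k} \in \OPRS$, forcing $\AutRS_{a_1,\ldots,a_k}$ to act transitively on every $\AutR_{a_1,\ldots,a_k}$-orbit; combined with the inclusion $\AutRS_{a_1,\ldots,a_k} \subseteq \AutR_{a_1,\ldots,a_k}$ this equalizes their orbit decompositions, and the standard description of orbits on $U^{k+1}$ in terms of orbits on $U^k$ and orbits of the stabilizer of a representative completes the induction.

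The main obstacle is that an individual partition records only unordered orbit structure and not the specific action of any automorphism, so one cannot hope to pin down a single group element from a single partition; the inductive family $\{\Omega_{a_1,\ldots,a_k}\}$ sidesteps this by coordinating information across all tuples at once. For the cycle-partition version the same inductive skeleton works once one observes that $\Omega_{a_1,\ldots,a_k}$ equals the partition-lattice join of the cycle partitions of all elements of $\AutR_{a_1,\ldots,a_k}$, and that these elements are precisely the permutations in $\AutR$ whose cycle partitions contain each $\{a_i\}$ as a singleton class; the hypothesis $\CPR = \CPRS$ then identifies this subfamily of partitions with its analogue for $\AutRS$, yielding the same join and hence the same stabilizer orbits.
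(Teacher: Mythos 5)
Your proposal is correct, but it takes a genuinely different route from the paper's. The paper isolates a clean group-theoretic lemma (Theorem~\ref{teo:orbit-partitions}: if $H\le G\le S_n$ have the same orbit partitions then $H=G$), proved by induction on the degree $n$ using the point stabilizers $G_n,H_n$, Lagrange's theorem and the orbit--stabilizer correspondence, and it reduces the cycle case to the orbit case via Lemma~\ref{lem:orbits-from-cycles}, which recovers orbits from cycle partitions by an explicit closure algorithm. You instead induct on the tuple length $k$, proving that $\AutR$ and $\AutRS$ have the same orbits on $U^k$ and counting only once, at $k=n$, via the orbit of $(1,\dots,n)$; your engine is the observation that the orbit partition of a pointwise stabilizer $\AutR_{a_1,\dots,a_k}$ belongs to $\OPR$ and has each $\{a_i\}$ as a singleton class, so the witnessing automorphisms required by Definition~\ref{def:orbit-partition} for the same partition viewed in $\OPRS$ must fix every $a_i$ and hence lie in $\AutRS_{a_1,\dots,a_k}$ --- do make this singleton argument explicit in the orbit case, since it is the only reason the witnesses land in the pointwise stabilizer rather than merely among class-preserving elements of $\AutRS$. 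Your lattice-join description of stabilizer orbits in the cycle case is an algebraic repackaging of the paper's closure algorithm and lets both cases share one induction, which is a genuine economy. A further small advantage of your write-up is that the forward direction explicitly checks condition~1 of Definitions~\ref{def:orbit-partition} and~\ref{def:cycle-partition} against $D(S)$ (using that $S$-compatible automorphisms preserve $D(S)$), a point the paper's Corollary~\ref{teo:scheme-with-orbits} passes over; the price is an induction over tuple lengths that is heavier than the paper's single induction on $n$ with one application of orbit--stabilizer per step.
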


% \section{The Structure of $\AutR$, $\OPR$ and $\CPR$}

Let $\database$ be a relational database, and let $\AutR$ and $\cgR$ be
respectively the group of ${\cal R}$-compatible automorphisms and the
cogroup-relation of ${\cal R}$.
A useful fact proved in~\cite{Par78} is that the cogroup-relation is
expressible from ${\cal R}$, that is $\cgR \in \BIR$.
Using this fact, we are able to prove the following theorem.

\begin{Thm}
  $\AutR = {\rm Aut}(\cgR)$.
  \label{teo:AutR=cgR}
\end{Thm}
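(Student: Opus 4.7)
The plan is to prove the two inclusions $\AutR \subseteq \Aut(\cgR)$ and $\Aut(\cgR) \subseteq \AutR$ separately, exploiting the explicit structure of the cogroup-relation. By construction, the rows of $\cgR$ are exactly the tuples $(\psi(1),\psi(2),\ldots,\psi(n))$ as $\psi$ ranges over $\AutR$, so there is a bijection between $\AutR$ and the set of rows. Moreover, the first row, corresponding to the identity automorphism, is the distinguished tuple $(1,2,\ldots,n)$; this will be the key fact for one direction.

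For the forward inclusion, I would pick any $\phi \in \AutR$ and any tuple of $\cgR$, which has the form $(\psi(1),\ldots,\psi(n))$ for some $\psi \in \AutR$. Applying $\phi$ componentwise produces $((\phi \circ \psi)(1),\ldots,(\phi \circ \psi)(n))$. Since the paper has already observed that $\AutR$ is a group under composition, $\phi \circ \psi \in \AutR$, so this new tuple is again a row of $\cgR$. Hence $\phi$ is $\cgR$-compatible.

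For the reverse inclusion, I would take $\phi \in \Aut(\cgR)$ and apply compatibility to the identity row $(1,2,\ldots,n) \in \cgR$. The resulting tuple $(\phi(1),\phi(2),\ldots,\phi(n))$ must again be a row of $\cgR$, so it must equal $(\psi(1),\ldots,\psi(n))$ for some $\psi \in \AutR$; this says exactly that $\phi = \psi$ as a permutation of $U$, giving $\phi \in \AutR$.

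Neither direction presents any real obstacle: everything follows from the definition of $\cgR$ and the group property of $\AutR$. The remark in the excerpt that $\cgR \in \BIR$ is not strictly needed for the argument (though it would yield $\AutR \subseteq \Aut(\cgR)$ by Theorem~\ref{teo:Paredaens2}, since $\Aut({\cal R} \cup \{\cgR\}) = \AutR \cap \Aut(\cgR)$). The only point requiring care is the reverse direction, where one must use that the identity tuple $(1,2,\ldots,n)$ is present in $\cgR$ to pin down $\phi$ as a genuine element of $\AutR$.
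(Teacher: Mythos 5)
Your proof is correct, and the reverse inclusion (taking $\phi \in {\rm Aut}(\cgR)$, applying it to the identity row $(1,2,\ldots,n)$, and concluding that $(\phi(1),\ldots,\phi(n))$ must itself be a row of $\cgR$, hence $\phi \in \AutR$) is essentially word-for-word the paper's argument. Where you genuinely diverge is the forward inclusion $\AutR \subseteq {\rm Aut}(\cgR)$: the paper gets it for free by citing the nontrivial fact, proved in Paredaens' paper, that $\cgR \in \BIR$, and then invoking Theorem~\ref{teo:Paredaens}; you instead argue directly that applying $\phi \in \AutR$ componentwise to the row indexed by $\psi$ yields the row indexed by $\phi \circ \psi$, which lies in $\cgR$ by closure of the group $\AutR$ under composition. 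Your route is more elementary and self-contained -- it needs only the definition of $\cgR$ and the already-established group structure of $\AutR$, not the expressibility of the cogroup-relation in the relational algebra -- whereas the paper's route leans on a deeper imported result but thereby also reinforces the connection between ${\rm Aut}$-inclusions and expressibility that the surrounding discussion is building. Your closing observation correctly identifies $\cgR \in \BIR$ as dispensable for this theorem; the one point to note is that it is Theorem~\ref{teo:Paredaens} (the original characterization), rather than Theorem~\ref{teo:Paredaens2}, that the paper actually uses to extract $\AutR \subseteq {\rm Aut}(\cgR)$ from that fact.
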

\begin{proof}
  Since  $\cgR \in \BIR$, by Theorem~\ref{teo:Paredaens} we can
  conclude that $\AutR \subseteq {\rm Aut}(\cgR)$.
  Now, let $\phi \in {\rm Aut}(\cgR)$; as we have already observed, $\phi$ is a
  permutation of the set $U = D({\cal R})$, as well as of the tuples that
  compose the relation $\cgR$.
  Thus, for each tuple $t \in \cgR$, we have that $\phi(t) \in \cgR$.
  In particular, by letting $n$ be the cardinality of $U$, we have:
  \begin{equation*}
    \phi\big((1, 2, \ldots, n)\big) = \big(\phi(1), \phi(2), \ldots, \phi(n)
    \big) \in \cgR
  \end{equation*}
  Thus, the elements of $U$ are mapped by $\phi$ in such a way that the result
  is a row of the cogroup-relation; so we can conclude that $\phi \in \AutR$.
\end{proof}

A direct consequence of Theorem~\ref{teo:AutR=cgR} is that not only $\cgR
\in \BIR$, as established by Paredaens, but also $R \in {\rm BI}(\cgR)$ for
every relation $R \in {\cal R}$, since $D(R) \subseteq D(\cgR) = U$ and
${\rm Aut}(\cgR) = \AutR \subseteq {\rm Aut}(R)$.
As a corollary of Theorem~\ref{teo:AutR=cgR}, if we are interested to study
the expressive power of a given relational database $\database$ then we can
work as well on the database $\parang{U, \{\cgR\}}$, which has only one
relation and, moreover, such relation is an explicit representation of the
finite permutation group $\AutR$.

We now turn our attention to the structure of $\OPR$ and $\CPR$.
First of all we observe that, thanks to Theorem~\ref{teo:AutR=cgR}, we can
get rid of item 1 in Definitions~\ref{def:orbit-partition} and
\ref{def:cycle-partition} since, by considering the database $\parang{U,
  \{ \cgR \}}$, there is only one relation and, for such relation, it holds
$P_i \subseteq D(\cgR) = U$ for each $P_i \in {\cal P}$.

To characterize the sets of cycle and orbit partitions we need to recall
some notions from basic abstract algebra.

\begin{Def}
  Let $X$ be a set and $\parang{G, \cdot, e}$ a group.
  An \emph{action of $G$ on $X$} is a map $\ast : G \times X \to X$ such that
  \begin{enumerate}
  \item $\forall \, x \in X, \quad e \ast x = x$;
  \item $\forall \, g_1, g_2 \in G, \; \forall \, x \in X \quad (g_1 \cdot
    g_2) \ast x = g_1 \ast (g_2 \ast x)$
  \end{enumerate}
\end{Def}

In group theory it is customary to omit the operators symbols from expressions
when confusion does not arise; so, the expression in item 2 above is usually
written as: $(g_1 g_2)x = g_1 (g_2 x)$.
% It is clear from the previous definition that we can consider the group $\AutR$
% as acting on $U$.

\begin{Def}
  Let $G$ be a group acting on a set $X$.
  For $x_1, x_2 \in X$, let $x_1 \sim x_2$ if and only if there exists $g \in
  G$ such that $gx_1 = x_2$.
  It is not difficult to see that $\sim$ is an equivalence relation on $X$,
  and thus it induces a partition ${\cal P}$ on $X$.
  The classes of ${\cal P}$ are called the \emph{orbits} in $X$ under $G$.
  If $x \in X$, the class containing $x$ --- denoted by $Gx$ --- is called
  the \emph{orbit of} $x$ under $G$.
  In other words, $Gx = \{y \in X \: | \: y = gx \; {\rm for \; some} \; g
  \in G\}$.
\end{Def}

It is not difficult to see that the partition induced by the orbits of
$\AutR$ on $U$ satisfies Definition~\ref{def:orbit-partition}.
In fact, every automorphism $\phi \in \AutR$ maps each orbit $\AutR x$ into
itself and, given a  pair $a_1, a_2$ of elements of $U$, there exists an
automorphism that maps $a_1$ to $a_2$ if and only if $a_1$ and $a_2$ are in
the same orbit.
Moreover, if $H$ is a subgroup of a group $G$ acting on the set $X$, then
every orbit $Hx$ is a subset of the orbit $Gx$; more precisely, it is not
difficult to prove that the orbits induced by $H$ are a \emph{refinement}
of the orbits induced by $G$.
Since each partition induced by the orbits of every subgroup of $\AutR$
satisfies Definition~\ref{def:orbit-partition}, we have that $\OPR$ contains
the set of those partitions.

Vice versa, let ${\cal P} \in \OPR$.
It is not difficult to see that the set of automorphisms $\phi \in \AutR$ that
map each class of ${\cal P}$ into itself and that map each element of a class
to an element of the same class forms a subgroup of $\AutR$; moreover, the
orbit partition induced by such a subgroup is just ${\cal P}$.
As a consequence, $\OPR$ is a subset of the set of partitions induced by all
the subgroups of $\AutR$; since also the converse inclusion holds, the two
sets indeed coincide.

\begin{Def}
  Let $G$ be a group acting on the set $X$, and let $g \in G$.
  For $x_1, x_2 \in X$, let $x_1 \sim x_2$ if and only if there exists an
  integer $n$ such that $x_2 = g^nx_1$, where $g^n$ is the application of
  $g$ for $n$ times.
  It is not difficult to see that $\sim$ is an equivalence relation on $X$,
  and thus it induces a partition ${\cal P}$ on $X$.
  The classes of ${\cal P}$ are called the \emph{cycles} of $g$ on $X$.
\end{Def}

Analogously to what said about orbits, it is not difficult to see that the
partitions induced by the cycles of the automorphisms of $\AutR$ satisfy
Definition~\ref{def:cycle-partition}.
We observe that, while an orbit partition is induced by a subgroup of $\AutR$,
a cycle partition is induced by an automorphism, that is by an element of
$\AutR$.
The class $\CPR$ is thus the set of cycle partitions obtained by considering
every element of $\AutR$.

\begin{Def}
  Let $G$ be a group acting on the set $X$ and let $g$ be a permutation in
  $G$.
  Then the \emph{orbits of the}
  (cyclic) \emph{group} $\parang{g}$ generated by $g$ are the cycles of $g$.
  Since $\parang{g}$ is a subgroup of $G$, we have immediately that every
  cycle partition of $G$ is also an orbit partition of $G$, that is,
  $\CPR \subseteq \OPR$.
\end{Def}

Example~\ref{ex:Klein-group} can be used to show that the converse does not
generally hold: not every orbit partition is also a cycle partition.
In fact we have:
\begin{align*}
  \CPR &= \Big\{\big\{\{1\},\{2\},\{3\},\{4\}\big\},
  \big\{\{1,2\},\{3,4\}\big\}, \\
  & \hspace{0.8cm} \big\{\{1,3\},\{2,4\}\big\},
  \big\{\{1,4\},\{2,3\}\big\}\Big\} \\
  \OPR &= \CPR \cup \Big\{\big\{\{1,2,3,4\}\big\}\Big\}
\end{align*}

As noted above, Theorem~\ref{teo:AutR=cgR} allows us to deal only with
cogroup-relations instead of sets of arbitrary relations.
The same can be done when working with cycle and orbit partitions: since
cycles and orbits that form the partitions in $\CPR$ and $\OPR$ are
completely determined from the elements and the subgroups of $\AutR$
respectively, by Theorem~\ref{teo:AutR=cgR} we can conclude that
$\CPR = {\rm CP}(\cgR)$ and $\OPR = {\rm OP}(\cgR)$.

It is possible to show that both the set $\CPR$ of cycle partitions and the
set $\OPR$ of orbit partitions of a given database $\database$ constitute a
partially ordered set (poset) with respect to the binary relation $\le$, where
${\cal P}_1 \le {\cal P}_2$ iff each class of ${\cal P}_1$ is
contained in some class of ${\cal P}_2$, where ${\cal P}_1$ and ${\cal P}_2$
are two partitions in ${\rm P}({\cal R})$,
${\rm P}({\cal R})$ is equal to $\CPR$ or $\OPR$.
In fact, it is not difficult to see that $\le$ is reflexive, antisymmetric and
transitive: that is, $\le$ is an order relation over both $\CPR$ and $\OPR$.
One notably difference between the posets $\parang{\OPR, \le}$ and
$\parang{\CPR, \le}$ is that the first has always a maximum element,
corresponding to the orbits of the entire $\AutR$, while the second may not
have a maximum element, as shown above referring to Example
\ref{ex:Klein-group}, where $\AutR$ is the so called Klein group.
Instead, both the posets have a minimum element, corresponding to the cycles
(equal to the orbits) induced by the identity element of $\AutR$: that is,
the trivial partition, where each class is a singleton.

In order to prove our main results we need some definitions and some well
known properties of finite groups.
Here we just recall the notion of \emph{stabilizer}; we address the reader to
an introductory book on abstract algebra, such as
\cite{FirstCourseAbstractAlgebra}, for the
notion of coset and its properties.

\begin{Def}
  Let $G$ be a group acting on a set $X$, and let $x \in X$.
  The subgroup $G_x$ of $G$ defined as $G_x = \{g \in G \: | \: gx = x\}$
  is called the \emph{stabilizer} of $x$ in $G$.
\end{Def}
It is not difficult to see that if $G$ is a group which acts on the set $X$,
and $x \in X$, then the stabilizer $G_x$ of $x$ can be considered as a group
which acts on the set $X \setminus \{x\}$.
The following are two well known results in group theory: Lagrange's theorem
-- which correlates the cardinality of a given group $G$ and the cardinality
of a given subgroup $H$ of $G$ with the number of left cosets of $G$ with
respect to $H$ -- and a theorem which expresses the cardinality of the
orbit of $G$ containing $x$ as the number of left cosets of $G$ with respect
to the stabilizer $G_x$.

\begin{Thm}[Lagrange's Theorem]
  Let $G$ be a finite group, and let $H$ be a subgroup of $G$.
  Then $|G| = (G:H) \cdot |H|$,
  where $(G:H)$ is the number of left cosets of $G$ with respect to $H$,
  and is usually called the \emph{index of $H$ in $G$}.
\end{Thm}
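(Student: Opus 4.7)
The plan is to prove Lagrange's theorem by showing that the left cosets of $H$ in $G$ partition $G$ into blocks all of the same cardinality $|H|$. Counting the elements of $G$ by summing over the cosets then yields $|G| = (G:H) \cdot |H|$ directly.

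The first step is to recall that for each $g \in G$ the left coset is $gH = \{gh : h \in H\}$, and to show these cosets form a partition of $G$. I would introduce the relation $a \sim b$ iff $a^{-1}b \in H$ and verify it is an equivalence relation using the subgroup axioms: reflexivity from $e \in H$, symmetry from closure of $H$ under inversion, and transitivity from closure of $H$ under the group operation. A short computation shows that the equivalence class of $g$ under $\sim$ is exactly $gH$, so the left cosets are precisely the equivalence classes and hence partition $G$.

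The second step is to prove that $|gH| = |H|$ for every $g \in G$. The natural candidate map is $\varphi_g : H \to gH$ given by $\varphi_g(h) = gh$. Surjectivity is immediate from the definition of $gH$, and injectivity follows from left cancellation in $G$: if $gh_1 = gh_2$, then multiplying on the left by $g^{-1}$ gives $h_1 = h_2$. Hence $\varphi_g$ is a bijection and every coset has the same cardinality as $H$.

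Combining the two steps, $G$ is partitioned into $(G:H)$ pairwise disjoint subsets, each of size $|H|$, so $|G| = (G:H) \cdot |H|$. The statement is a classical result of finite group theory and does not depend on any of the database machinery developed earlier in the paper; the only genuinely substantive point is the verification that the $\sim$-class of $g$ coincides with $gH$, and everything else is straightforward bookkeeping.
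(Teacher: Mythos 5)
Your proof is correct and is the standard argument: the equivalence relation $a \sim b \iff a^{-1}b \in H$ partitions $G$ into left cosets, the map $h \mapsto gh$ is a bijection from $H$ onto $gH$ by left cancellation, and counting gives $|G| = (G:H)\cdot|H|$. Note that the paper itself supplies no proof of this statement --- it is quoted as a classical result of finite group theory with a pointer to an introductory algebra text --- so there is nothing to compare your argument against; your write-up is a complete and accurate rendering of the textbook proof.
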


\begin{Thm}
  Let $G$ be a finite group acting on a set $X$, and let $x \in X$.
  Then $|Gx| = (G:G_x)$, that is there exists a one-to-one correspondence between the
  elements of the orbit $Gx$ of $x$ under $G$ and the left cosets of the
  stabilizer $G_x$ in $G$.
  \label{teo:card-orbit}
\end{Thm}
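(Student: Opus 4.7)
The plan is to establish the one-to-one correspondence explicitly via the natural map sending a left coset of the stabilizer to the image of $x$ under any of its representatives. Concretely, I would define $\Phi : \{gG_x : g \in G\} \to Gx$ by $\Phi(gG_x) = gx$, and then verify in turn that $\Phi$ is well-defined, injective, and surjective; the counting equality $|Gx| = (G:G_x)$ then follows immediately, since $(G:G_x)$ is by definition the number of left cosets of $G_x$ in $G$.

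The three verifications are essentially the same calculation, run in different directions, so they are the heart of the argument. For well-definedness, suppose $gG_x = hG_x$; then $h^{-1}g \in G_x$, so $(h^{-1}g)x = x$, and applying $h$ to both sides (that is, using the action axiom $h \ast ((h^{-1}g)\ast x) = (h h^{-1}g)\ast x = gx$) gives $hx = gx$, so $\Phi(gG_x) = \Phi(hG_x)$. Injectivity is the converse: if $gx = hx$, then applying $h^{-1}$ yields $(h^{-1}g)x = x$, i.e., $h^{-1}g \in G_x$, which is exactly the statement $gG_x = hG_x$. Surjectivity is immediate from the definition of the orbit: every element of $Gx$ has the form $gx$ for some $g \in G$, and this is $\Phi(gG_x)$.

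Having shown $\Phi$ is a bijection, I conclude that $|Gx|$ equals the number of left cosets of $G_x$ in $G$, which is precisely $(G:G_x)$.

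I do not expect any real obstacle here: the proof uses only the two action axioms (identity acts trivially, and the action is compatible with composition) together with the definition of the stabilizer and of left cosets. The only minor point to be careful about is that both the well-definedness and injectivity arguments rely on the same equivalence $gG_x = hG_x \iff h^{-1}g \in G_x$, so stating this once at the outset streamlines the presentation.
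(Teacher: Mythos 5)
Your proof is correct: the map $\Phi(gG_x) = gx$ together with the equivalence $gG_x = hG_x \iff h^{-1}g \in G_x \iff gx = hx$ is the standard orbit--stabilizer argument, and all three verifications (well-definedness, injectivity, surjectivity) are carried out properly. The paper itself gives no proof of this statement --- it is quoted as a well-known fact from group theory (deferring to an introductory text such as \cite{FirstCourseAbstractAlgebra}) --- so there is nothing to compare against; your argument is exactly the one such a reference would supply.
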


We are now able to prove the following theorem.
\begin{Thm}
  Let $G$ be a subgroup of the symmetric group $S_n$, and let $H$ be a
  subgroup of $G$.
  If the orbit partitions of $G$ and $H$ are the same, then $H = G$.
  \label{teo:orbit-partitions}
\end{Thm}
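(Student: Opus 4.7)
I would proceed by induction on $n$, using as the main tool a \emph{stabilizer reduction}: given $H \leq G \leq S_X$ with $X = \{1,\ldots,n\}$ and $\mathrm{OP}(G) = \mathrm{OP}(H)$, the point stabilizers satisfy $\mathrm{OP}(G_x) = \mathrm{OP}(H_x)$ for every $x \in X$, where $\mathrm{OP}(\cdot)$ denotes the set of orbit partitions of all subgroups. The inclusion $\mathrm{OP}(H_x) \subseteq \mathrm{OP}(G_x)$ is immediate from $H_x \leq G_x$. For the reverse, any subgroup $L \leq G_x$ fixes $x$, so $\{x\}$ is a singleton class of its orbit partition $\mathcal{P}_L$; the hypothesis then yields some $K \leq H$ with $\mathcal{P}_K = \mathcal{P}_L$, which forces $K$ to fix $x$ as well and hence to lie in $H_x$. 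A second preliminary observation is that the orbit partitions $\mathcal{P}_G$ and $\mathcal{P}_H$ of the full groups also coincide: $\mathcal{P}_G \in \mathrm{OP}(H)$ equals $\mathcal{P}_K$ for some $K \leq H$, but $\mathcal{P}_K$ refines $\mathcal{P}_H$ which refines $\mathcal{P}_G$, forcing all three to be equal. In particular, $G$ and $H$ share orbits, fixed points, and satisfy $|Gx| = |Hx|$ for every $x \in X$.

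The base case $n=1$ is trivial. For the inductive step, if $G$ has a fixed point $x$ (hence so does $H$), I restrict the action to $X \setminus \{x\}$; the sets $\mathrm{OP}(G)$ and $\mathrm{OP}(H)$ continue to agree on the smaller ground set (delete the singleton class $\{x\}$ from each partition), and the inductive hypothesis on $n-1$ points yields $G = H$. Otherwise $G$ acts without fixed points, so for every $x$ the stabilizer $G_x$ is a proper subgroup fixing $x$; viewing $G_x$ and $H_x$ as subgroups of $S_{X \setminus \{x\}}$, the stabilizer reduction combined with the inductive hypothesis gives $G_x = H_x$ for every $x$. Orbit--stabilizer together with $|Gx| = |Hx|$ then yields $|G|/|G_x| = |H|/|H_x|$, and because $|G_x| = |H_x|$ we obtain $|G| = |H|$, so $H = G$.

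The step I expect to demand the most care is the stabilizer reduction, together with the bookkeeping needed to match $\mathrm{OP}$ on $X$ with $\mathrm{OP}$ on $X \setminus \{x\}$, so that the induction genuinely descends to a strictly smaller ground set. Once that bookkeeping is clean, the closing argument is a routine application of the orbit--stabilizer relation.
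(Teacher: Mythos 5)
Your proof is correct and follows essentially the same route as the paper's: induction on $n$, passing to the point stabilizers (identified via the partitions having $\{x\}$ as a singleton class), and closing with the orbit--stabilizer relation $|G| = |Gx|\cdot|G_x|$. You spell out the stabilizer reduction and the equality of the top-level orbit partitions more explicitly than the paper does, and your case split on fixed points is harmless but not needed, since the stabilizer argument already covers that case.
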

\begin{proof}
  We prove the assertion by induction on $n$.
  For $n\le 2$ the theorem can be proved by direct
  inspection of the subgroups of $S_n$.

  Now, let us suppose that the theorem is true for $n-1$, and let us show that
  it holds also for $n$.
  We first observe that since the orbit partitions of $G$ and $H$ are the same,
  then also the orbits $Gn$ and $Hn$ of the element $n$ with respect to $G$ and
  $H$ are the same.
  Now, if we take all the partitions having $\{n\}$ as a class, we get the orbit
  partitions induced by the stabilizers $G_n$ and $H_n$ of the element $n$ with
  respect to $G$ and $H$.
  These orbit partitions are equal and thus, by induction hypothesis, $G_n =
  H_n$.
  By Lagrange's theorem, we can express the cardinalities of $G$ and $H$ with
  respect to the cardinalities of their stabilizers as $|G| = (G : G_n) \cdot
  |G_n|$ and $|H| = (H : H_n) \cdot |H_n|$.
  where $(G : G_n)$ and $(H : H_n)$ are the indices, respectively, of the
  stabilizer $G_n$ in $G$ and of the stabilizer $H_n$ in $H$.
  By Theorem~\ref{teo:card-orbit}, we can infer that $|G| = |Gn| \cdot |G_n|$
  and
  $|H| = |Hn| \cdot |H_n|$.
  Since $|Gn| = |Hn|$ and $|G_n| = |H_n|$, we can conclude that $G$ and $H$ have
  the same order, and thus $G = H$.
\end{proof}

Theorem~\ref{teo:orbit-partitions} allows us to show that the orbit partitions
of a given database satisfy Theorem Scheme II; in fact, the following theorem
provides a first characterization of expressible queries in relational
databases alternative to the one originally given by Paredaens.

\begin{Cor}
  Let $\database$ be a relational database, and let $S$ be a relation over
  $U$.
  Then       $\AutR = \AutRS \iff \OPR = \OPRS$
  \label{teo:scheme-with-orbits}
\end{Cor}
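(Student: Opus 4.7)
The forward implication is essentially by definition. If $\AutR = \AutRS$, then the two groups have identical collections of subgroups, and since each orbit partition in $\OPR$ (resp.\ $\OPRS$) is induced by a subgroup of $\AutR$ (resp.\ $\AutRS$) acting on $U$, we obtain $\OPR = \OPRS$ immediately.

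For the converse, my plan is to combine two ingredients already available in the excerpt: (i) the inclusion $\AutRS \subseteq \AutR$, which follows because any automorphism respecting every relation in ${\cal R} \cup \{S\}$ in particular respects every relation in ${\cal R}$; and (ii) Theorem~\ref{teo:orbit-partitions}, which says that a subgroup of $S_n$ is determined by its orbit partition within an overgroup. The key observation that connects the hypothesis $\OPR = \OPRS$ to the hypothesis of Theorem~\ref{teo:orbit-partitions} is the fact, already noted in the discussion of the poset $\parang{\OPR,\le\nobreak}$, that $\OPR$ has a maximum element, namely the orbit partition induced by the whole group $\AutR$ (the orbits of any subgroup refine those of the full group). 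Likewise, the maximum of $\OPRS$ is the orbit partition induced by the whole group $\AutRS$.

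So the argument goes as follows. Assume $\OPR = \OPRS$. Since these two posets (under the refinement order $\le$) are equal as sets, they share the same maximum element. Therefore the orbit partition of $U$ under $\AutR$ equals the orbit partition of $U$ under $\AutRS$. Viewing both $\AutR$ and $\AutRS$ as subgroups of the symmetric group $S_n$ on $U = \{1,\ldots,n\}$ with $\AutRS \subseteq \AutR$, Theorem~\ref{teo:orbit-partitions} applies with $G = \AutR$ and $H = \AutRS$, and yields $\AutRS = \AutR$, as required.

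The only step that requires any thought is extracting the correct hypothesis of Theorem~\ref{teo:orbit-partitions} from the set-theoretic equality $\OPR = \OPRS$: one must recall that these sets contain orbit partitions of \emph{all} subgroups, not just the top group, and single out the unique maximum element to obtain an equality between the orbit partitions of the two full groups. Once this is in place, the subgroup inclusion $\AutRS \subseteq \AutR$ together with Theorem~\ref{teo:orbit-partitions} closes the argument with no further calculation.
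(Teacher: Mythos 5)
Your forward direction is fine and matches the paper. The converse, however, contains a genuine gap: you have misread the hypothesis of Theorem~\ref{teo:orbit-partitions}. In that theorem, ``the orbit partitions of $G$ and $H$ are the same'' means that the two \emph{sets} of orbit partitions coincide --- one partition for every subgroup of $G$, respectively of $H$ --- which is exactly the kind of object that $\OPR$ and $\OPRS$ are. Its own proof uses this in an essential way: it selects, inside these sets, all partitions having $\{n\}$ as a class in order to recover the orbit partitions of the stabilizers $G_n$ and $H_n$ and run the induction. You instead reduce the hypothesis $\OPR = \OPRS$ to the equality of the two \emph{maximum} elements, i.e.\ to the statement that the single partition of $U$ into orbits of $\AutR$ coincides with that of $\AutRS$, and then claim Theorem~\ref{teo:orbit-partitions} applies to this weaker statement. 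It does not: under that reading the theorem is false. Take $G = S_3$ and $H = \langle (1\;2\;3)\rangle$; then $H$ is a proper subgroup of $G$, yet both are transitive, so both induce the single orbit partition $\big\{\{1,2,3\}\big\}$. Equality of orbits alone does not determine the group.

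The repair is simpler than what you attempted: note, as you correctly do, that $\AutRS$ is a subgroup of $\AutR$, which is a subgroup of $S_n$, and then feed the hypothesis $\OPR = \OPRS$ \emph{as it stands} (equality of the full sets of orbit partitions, one per subgroup) into Theorem~\ref{teo:orbit-partitions} with $G = \AutR$ and $H = \AutRS$. This is exactly the paper's proof; the ``maximum element'' extraction that you describe as the key step is not needed, and used as the sole input to the theorem it breaks the argument.
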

\begin{proof}
  If $\AutR = \AutRS$, since the orbit partitions are completely determined from the
  subgroups of $\AutR$, we obtain that $\OPR = \OPRS$.

  For the converse, we observe that $\AutR$ is a subgroup of the symmetric
  group $S_n$, and $\AutRS$ is a subgroup of $\AutR$.
  By hypothesis, the orbit partitions of $\AutR$ and $\AutRS$ are equal and
  thus, by Theorem~\ref{teo:orbit-partitions}, $\AutR = \AutRS$.
\end{proof}

A second characterization of expressible queries  in relational databases can
be obtained by considering cycle partitions instead of orbit partitions.
We need the following lemma.

\begin{Lem}
  Let $G$ be a subgroup of the symmetric group $S_n$, and let $H$ be a
  subgroup of $G$.
  If the cycle partitions of $G$ and $H$ are the same, then also the orbits
  of $G$ and $H$ are the same, that is $Hx = Gx$ for every $x \in \{ 1, 2,
  \ldots, n \}$.
  \label{lem:orbits-from-cycles}
\end{Lem}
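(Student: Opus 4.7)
The plan is to exploit the containment $H \subseteq G$ to get the easy inclusion $Hx \subseteq Gx$ for free, and then use the hypothesis on cycle partitions to produce, for every $G$-movement between two points in the same $G$-orbit, a corresponding $H$-movement between them.

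First I would fix $x \in \{1, \dots, n\}$ and observe that because $H$ is a subgroup of $G$, each $H$-orbit is contained in the $G$-orbit of the same element; hence $Hx \subseteq Gx$ needs no argument. For the reverse inclusion, take any $y \in Gx$, so that $y = g x$ for some $g \in G$. Then $x$ and $y$ lie in the same cycle of the permutation $g$, which means that they belong to the same class of the cycle partition determined by $g$ (the orbits of the cyclic subgroup $\langle g \rangle$).

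Next I would invoke the hypothesis $\text{CP}(G) = \text{CP}(H)$: since the cycle partition induced by $g$ belongs to $\text{CP}(G)$, it must also belong to $\text{CP}(H)$, so there exists $h \in H$ whose cycle partition coincides with that of $g$. In particular $x$ and $y$ are in the same cycle of $h$, which by definition of a cycle gives an integer $n$ with $y = h^n x$. Since $h \in H$ and $H$ is a group, $h^n \in H$, so $y \in Hx$. This yields $Gx \subseteq Hx$ and hence $Gx = Hx$, as required.

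The only delicate point is making sure the definition of $\text{CP}(\cdot)$ is being used correctly: one must read it as the \emph{set} of cycle partitions, one for each group element, so that equality of these sets implies that every partition arising from an element of $G$ is also realised by some element of $H$. Granted that reading (which matches Definition 3.4 in the paper), the argument above is straightforward and does not need Lagrange's theorem or stabilizer counting --- those will be needed only in the subsequent step that upgrades equality of orbits to equality of the groups themselves.
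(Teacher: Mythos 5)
Your proof is correct, and it takes a genuinely different --- and more direct --- route than the paper's. The paper proves the lemma by exhibiting an iterative procedure (Algorithm~\ref{alg:1}) that reconstructs the orbit $Gx$ as the least subset of $\{1,\dots,n\}$ containing $x$ which is a union of cycles in \emph{every} cycle partition of $G$; since that construction consumes only the set of cycle partitions, which is the same for $G$ and $H$ by hypothesis, the computed orbits coincide. You instead get $Hx \subseteq Gx$ for free from $H \leq G$, and for the reverse inclusion you match elements one at a time: $y = gx$ places $x$ and $y$ in a single class of the cycle partition induced by $g$, the hypothesis supplies an $h \in H$ inducing the same partition, and then $y = h^m x \in Hx$. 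This avoids the closure argument entirely (no iteration over all partitions is needed, because a single group element already witnesses $y \in Gx$), it is fully explicit where the paper leaves the correctness of the algorithm as ``not difficult to see'', and it in fact uses only the inclusion ${\rm CP}(G) \subseteq {\rm CP}(H)$ together with $H \leq G$, a slight strengthening of the stated lemma. What the paper's version buys in exchange is a standalone characterization of $Gx$ purely in terms of the set of cycle partitions, with no reference to which group elements induce them; that is of some independent interest but is not needed for this lemma or for Theorem~\ref{teo:cycle-partitions}. One cosmetic remark: you reuse the letter $n$ both for the degree of $S_n$ and for the exponent of $h$; rename the exponent to avoid the clash.
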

\begin{proof}
  Since the orbit $Gx$ is the set of elements of $\{ 1, 2, \ldots, n \}$ which
  are reachable from $x$ through some element $g$ of $G$, while a cycle
  containing $x$ is the set of elements which are reachable from $x$ through
  one element $g$ of $G$, one method to build $Gx$ from the cycle partitions
  of $G$ is given by Algorithm~\ref{alg:1}.

  \begin{algorithm}[ht!]
    \caption{BuildOrbit}
    \label{alg:1}
    \KwData{an integer $x \in \{1, \ldots ,n\}$, a subgroup $G$ of $S_n$, a set
      $CP$ of cycle partitions}
    \KwResult{Result}
    Result $\gets$ $\{x\}$\;
    \Repeat{Modified = false}{%
      Modified $\gets$ false\;
      \ForEach{partition ${\cal P}$ in $CP$}{%
        Cycles $\gets$ the smallest union of cycles of ${\cal P}$ which covers
        Result\;
        \If{Cycles $\setminus$ Result $\neq \emptyset$}{%
          Result $\gets$ Result $\cup$ Cycles\;
          Modified $\gets$ true\;
        }
      }
    }
  \end{algorithm}

  Algorithm~\ref{alg:1} computes the least subset $O$ of $\{ 1, 2, \ldots, n \}$ which
  contains $x$ and such that, for every cycle partition ${\cal P}$ of $G$, $O$
  is the union of some cycles in ${\cal P}$; it is not difficult to see that
  $O$ is, indeed, the orbit $Gx$.

  Since the cycle partitions of $G$ and $H$ are the same by hypothesis, the
  orbits computed by the algorithm above will be the same for $G$ and $H$,
  for every choice of $x \in \{ 1, 2, \ldots, n \}$.
\end{proof}

We are now ready to prove the following theorem.

\begin{Thm}
  Let $G$ be a subgroup of the symmetric group $S_n$, and let $H$ be a
  subgroup of $G$.
  If the cycle partitions of $G$ and $H$ are the same, then $H = G$.
  \label{teo:cycle-partitions}
\end{Thm}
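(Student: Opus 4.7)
The plan is to chain the two results that immediately precede this theorem, so essentially no new machinery is needed. First, I would invoke Lemma~\ref{lem:orbits-from-cycles}: since by hypothesis $G$ and $H$ induce the same cycle partitions on $\{1,2,\dots,n\}$, the lemma guarantees that $Hx = Gx$ for every $x \in \{1,2,\dots,n\}$. This is exactly the statement that the single orbit partition induced by the action of $G$ on $\{1,2,\dots,n\}$ coincides with the one induced by the action of $H$.

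Next, I would apply Theorem~\ref{teo:orbit-partitions} to the pair $(G,H)$: $H$ is still a subgroup of $G$, and we have just argued that their orbit partitions agree, so the theorem yields $H = G$, which is exactly the conclusion we want.

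There is no real obstacle here: the lemma does the genuine combinatorial work (reconstructing each orbit as the smallest union of cycles closed under every cycle partition), and Theorem~\ref{teo:orbit-partitions} does the group-theoretic work (an induction on $n$ using stabilizers and Lagrange). The role of this theorem is simply to package those two facts into the cycle-partition analogue of Theorem~\ref{teo:orbit-partitions}, which is what will subsequently be needed to derive the cycle-partition version of the expressibility characterization (the counterpart of Corollary~\ref{teo:scheme-with-orbits} for $\CPR$ in place of $\OPR$).
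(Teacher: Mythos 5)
Your reduction breaks at the second step because it silently strengthens what Lemma~\ref{lem:orbits-from-cycles} actually delivers. The lemma concludes only that the \emph{orbits} of $G$ and $H$ coincide, i.e.\ that the single partition of $\{1,\dots,n\}$ into $G$-orbits equals the partition into $H$-orbits. The hypothesis of Theorem~\ref{teo:orbit-partitions}, however, is that the \emph{sets} of orbit partitions coincide: by the discussion following Definition~\ref{def:orbit-partition}, this is the set of partitions induced by the orbits of \emph{all subgroups} of $G$ (respectively of $H$), of which the partition into $G$-orbits is merely the maximum element. Equality of the top elements does not give equality of the sets, and the implication you actually need (``same orbits $\Rightarrow H=G$'') is false: take $G=S_3$ and $H=A_3$; both act transitively, so both have orbit partition $\bigl\{\{1,2,3\}\bigr\}$, yet $H\neq G$. (This pair does not contradict the theorem itself, since ${\rm CP}(S_3)\neq{\rm CP}(A_3)$ --- the transpositions contribute cycle partitions that no element of $A_3$ produces --- but it shows your intermediate step cannot carry the load.)

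The paper instead \emph{reruns} the induction of Theorem~\ref{teo:orbit-partitions} with the cycle-partition hypothesis carried along: Lemma~\ref{lem:orbits-from-cycles} gives $Gn=Hn$; the cycle partitions having $\{n\}$ as a class are exactly those of elements fixing $n$, so the stabilizers $G_n$ and $H_n$ have the same cycle partitions and the inductive hypothesis yields $G_n=H_n$; Lagrange's theorem and Theorem~\ref{teo:card-orbit} then give $|G|=|Gn|\cdot|G_n|=|Hn|\cdot|H_n|=|H|$, whence $G=H$. If you want to keep Theorem~\ref{teo:orbit-partitions} as a black box, you must first prove the stronger bridge ${\rm CP}(G)={\rm CP}(H)\Rightarrow{\rm OP}(G)={\rm OP}(H)$; this can be done (any $K\le G$ is generated by finitely many elements, each of whose cycle partitions is matched by some element of $H$, and the subgroup of $H$ these generate has the same orbits as $K$, since an orbit is the least set containing a given point that is a union of cycles of each generator), but that argument is absent from your proposal.
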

\begin{proof}
  By Lemma~\ref{lem:orbits-from-cycles}, the orbits of $G$ and $H$ are the
  same.
  Thus we can prove the theorem by the same argument used for Theorem
  \ref{teo:orbit-partitions}.
\end{proof}

A direct consequence of Theorem~\ref{teo:cycle-partitions} is that the cycle
partitions of a given database satisfy Theorem Schema II; thus, the following
theorem provides a second characterization of expressible queries in
relational databases alternative to the one originally given by Paredaens.
The proof is analogous to the one given for Theorem
\ref{teo:scheme-with-orbits}.

\begin{Thm}
  Let $\database$ be a relational database, and let $S$ be a relation over
  $U$.
  Then:
  \begin{equation*}
    \AutR = \AutRS \iff \CPR = \CPRS
  \end{equation*}
  \label{teo:scheme-with-cycles}
\end{Thm}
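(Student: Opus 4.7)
The plan is to mirror the proof of Corollary \ref{teo:scheme-with-orbits}, just substituting Theorem \ref{teo:cycle-partitions} for Theorem \ref{teo:orbit-partitions} at the key step. Both directions should be short given the machinery already established.

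For the forward direction ($\AutR = \AutRS \Rightarrow \CPR = \CPRS$), I would argue directly from the definition of cycle partition. A cycle partition is induced by choosing a single element $\phi$ of the automorphism group and taking the orbits of the cyclic subgroup $\langle \phi \rangle$ on $U$. Since this construction depends only on the underlying set of group elements (together with $U$, which is fixed), equality of $\AutR$ and $\AutRS$ as subgroups of $S_n$ immediately gives equality of the associated collections of cycle partitions. Item 1 in Definition \ref{def:cycle-partition} is handled by the standing assumption $D({\cal R}) = U$, which still holds for ${\cal R} \cup \{S\}$ since $S$ is a relation over $U$.

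For the converse ($\CPR = \CPRS \Rightarrow \AutR = \AutRS$), I would first observe that $\AutRS \subseteq \AutR$: any permutation compatible with every relation in ${\cal R} \cup \{S\}$ is, a fortiori, compatible with every relation in ${\cal R}$. Thus $\AutRS$ is a subgroup of $\AutR$, and both sit inside the symmetric group $S_n$ where $n = |U|$. By hypothesis their cycle partitions coincide, so Theorem \ref{teo:cycle-partitions} (with $G = \AutR$ and $H = \AutRS$) yields $\AutR = \AutRS$.

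There is no real obstacle here, since Theorem \ref{teo:cycle-partitions} already does the heavy lifting via Lemma \ref{lem:orbits-from-cycles} and Theorem \ref{teo:orbit-partitions}. The only subtlety worth checking is that Theorem \ref{teo:cycle-partitions} applies in exactly the form needed: we must have $H \leq G \leq S_n$, which holds with $H = \AutRS$ and $G = \AutR$ as noted above. Once that inclusion is pointed out, combining it with the equality of cycle partitions gives the result in one line.
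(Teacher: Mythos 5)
Your proposal is correct and follows exactly the route the paper intends: the paper's own ``proof'' of Theorem~\ref{teo:scheme-with-cycles} is just the remark that it is analogous to Corollary~\ref{teo:scheme-with-orbits}, with Theorem~\ref{teo:cycle-partitions} replacing Theorem~\ref{teo:orbit-partitions}, and you have spelled out precisely that argument, including the needed inclusion $\AutRS \subseteq \AutR$.
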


A final observation is due about Theorems~\ref{teo:scheme-with-orbits} and
\ref{teo:scheme-with-cycles}. Even though there
is a strong resemblance between our meta Theorem~\ref{thm:scheme1} and
Theorem~\ref{teo:Paredaens2}, our results \emph{cannot} be expressed neither in the
form $S \in \BIR \iff \OPR \subseteq {\rm OP}(S)$ and $D(S) \subseteq D({\cal
  R})$ nor in the form $S \in \BIR \iff \CPR \subseteq {\rm CP}(S)$  and
$D(S) \subseteq D({\cal R})$,
as shown in the next example.

\begin{Example}
  Let $\parang{U, \{ R \}}$ and $\parang{U, \{ S \}}$ be two relational
  databases, with:

  \begin{itemize}
  \item $U = \{1, 2, 3, 4, 5\}$
  \item R = \begin{tabular}{ccccc}
      $1$ & $2$ & $3$ & $4$ & $5$ \\
      $2$ & $3$ & $4$ & $5$ & $1$ \\
      $3$ & $4$ & $5$ & $1$ & $2$ \\
      $4$ & $5$ & $1$ & $2$ & $3$ \\
      $5$ & $1$ & $2$ & $3$ & $4$
    \end{tabular}
    \qquad
    S = \begin{tabular}{ccccc}
      $1$ & $2$ & $3$ & $4$ & $5$ \\
      $2$ & $3$ & $5$ & $1$ & $4$ \\
      $3$ & $5$ & $4$ & $2$ & $1$ \\
      $5$ & $4$ & $1$ & $3$ & $2$ \\
      $4$ & $1$ & $2$ & $5$ & $3$
    \end{tabular}
  \end{itemize}

  Notice that ${\rm Aut}(R)$ is the cyclic group generated
  by the permutation $(1 \; 2 \; 3 \; 4 \; 5)$, while ${\rm Aut}(S)$ is the
  cyclic group generated by the permutation $(1 \; 2 \; 3 \; 5 \; 4)$:
  \begin{equation*}
    {\rm Aut}(R) = \begin{tabular}{l}
      {\rm Identity} \\
      {\rm ($1$ $2$ $3$ $4$ $5$)} \\
      {\rm ($1$ $3$ $5$ $2$ $4$)} \\
      {\rm ($1$ $4$ $2$ $5$ $3$)} \\
      {\rm ($1$ $5$ $4$ $3$ $2$)}
    \end{tabular}
    \qquad
    {\rm Aut}(S) = \begin{tabular}{l}
      {\rm Identity} \\
      {\rm ($1$ $2$ $3$ $5$ $4$)} \\
      {\rm ($1$ $3$ $4$ $2$ $5$)} \\
      {\rm ($1$ $5$ $2$ $4$ $3$)} \\
      {\rm ($1$ $4$ $5$ $3$ $2$)}
    \end{tabular}
  \end{equation*}

  From ${\rm Aut}(R)$ and ${\rm Aut}(S)$ we can easily obtain ${\rm CP}(R) =
  {\rm OP}(R) = {\rm CP}(S) = {\rm OP}(S) = \left\{ \left\{\{1\}, \{2\}, \{3\},
      \{4\}, \{5\}\right\} , \left\{ \left\{1, 2, 3, 4, 5\}\right\} \right\} \right\}$.
  Clearly, $S$ is not expressible from $R$, since we have $D(R) = D(S)$ but
  ${\rm Aut}(R) \not\subseteq {\rm Aut}(S)$; on the other hand,
  ${\rm OP}(R) \subseteq {\rm OP}(S)$ and $D(S) \subseteq D(R)$, and ${\rm
    CP}(R) \subseteq {\rm CP}(S)$  and         $D(S) \subseteq D(R)$.
  The fact that $S$ is not expressible from $R$ can be correctly determined
  through orbit partitions or through cycle partitions by observing that:
  ${\rm OP}(\{R, S\}) = \Big\{\big\{\{1\},\{2\},\{3\},\{4\},\{5\}\big\}\Big\}
  \neq {\rm OP}(R)$
  or
  ${\rm CP}(\{R, S\}) = \{\{\{1\},\{2\},\{3\},\{4\},\{5\}\}\}
  \neq {\rm CP}(R)$.
\end{Example}

\section{Expressiveness in graph-based data bases}

In this section we study a simple graph-based model  where two labeled graphs are
used to model data bases.
A data base consists of two distinct layers: a \emph{schema} layer and a
\emph{structure} layer; the objects can be found in the latter, while the former
describe the data organization. Each layer is a labeled weakly-connected
directed graph, moreover there exists a function that maps a schema into a
structure: such function will be called an \emph{extension}. Both vertices and
edges of the graphs are labeled, and we can assume that the sets of edge labels
and vertex labels, as well as schema labels and structure labels, are disjoint.
An example of data base is represented in
Figures~\ref{fig:schema},~\ref{fig:structure}, from which  it is easy to note
how the schema and the structure are
closely related, the following definitions only formalize the intuitive idea.

\begin{figure}[htb]
  \begin{center}
    \includegraphics[width=\textwidth]{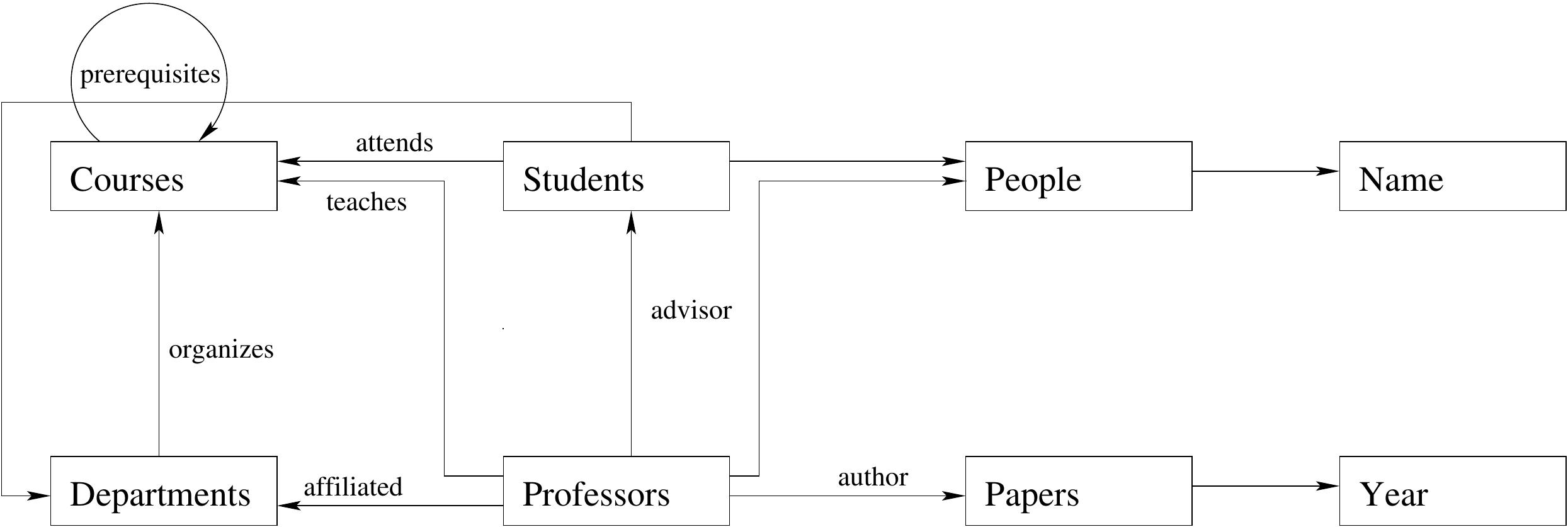}
  \end{center}
  \caption{Example of schema}
  \label{fig:schema}
\end{figure}
\begin{figure}[htb]
  \begin{center}
    \includegraphics[width=\textwidth]{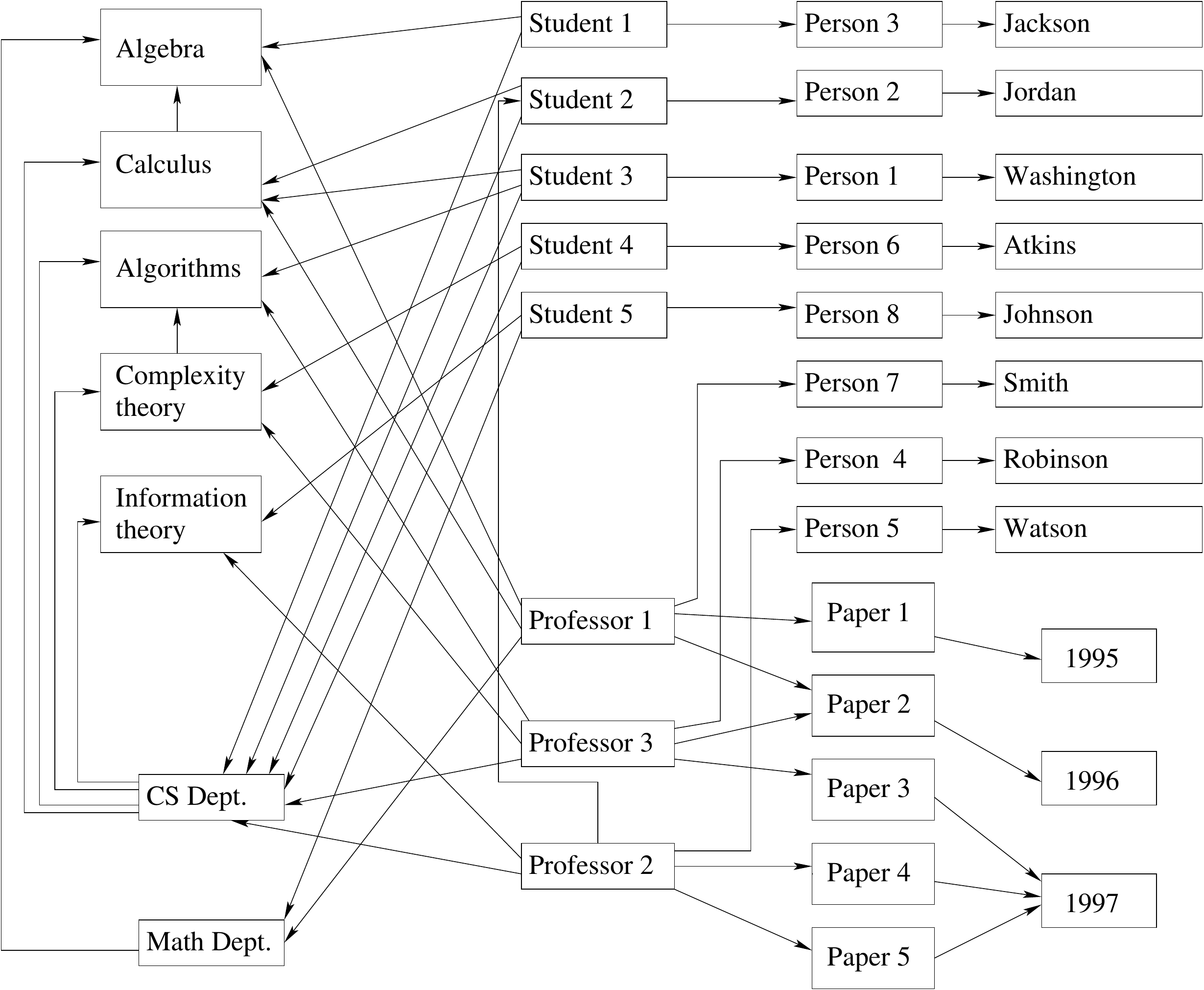}
  \end{center}
  \caption{Example of structure}
  \label{fig:structure}
\end{figure}

\begin{Def}[Schema]
  A  {\em schema graph}, in short {\em schema},  is
  a triple   $\Sigma=(G, \lambda_1, \lambda_2)$,
  where $G= (V,E)$ is  an oriented, weakly-connected
  graph, and  $\lambda_1$, $\lambda_2$ are respectively the injective functions that maps each node
  (resp. edge) to its label.
\end{Def}

\begin{Def}[Structure]
  \label{def_struttura}
  A  {\em structure}
  is a triple ${\struct} =(S, \lambda_1', \lambda_2')$, with
  $S$ a colored oriented graph $S=(V,E,\mu)$,
  where $V$ is the set of {\em nodes} of the structure,
  $E\subseteq V\times V$  is the set of {\em edges},
  $\lambda_1'$, $\lambda_2'$ are respectively the injective functions that maps each node
  (resp. edge) to its label, and $\mu : E \to \Gamma $,
  is a  labeling of the edges over the finite alphabet $\Gamma$, called {\em coloring}
  of the structure.
\end{Def}

In the following, we will use the set $\Gamma =\{true, false\}$ of
colors that allows to specify that a link between object instances
in ${\struct}$ is actual or not. In the example of
Fig.~\ref{fig:structure}, only the links labeled true are represented, and the
presence (or the abscence) of links labeled false does not change the data
stored in the data base. In Fig.~\ref{fig:false} is represented a
part of the structure, where false links are represented with dotted arrows.

The schema and the structure must be strongly correlated; in fact there must
exist a function, called \emph{extension} (denoted by $\Inst$), mapping the schema into the
structure. In order to have a sound definition of extension some restrictions
must be enforced, as pointed out in the following definition, where $\Pow(A)$
stands for the family of all nonempty subsets of $A$.

Informally $\Inst$ maps each vertex of the schema into some vertices of the
structure and each edge  of the schema into some edges of the
structure.

\begin{figure}[htb]
  \begin{center}
    \includegraphics[width=10cm]{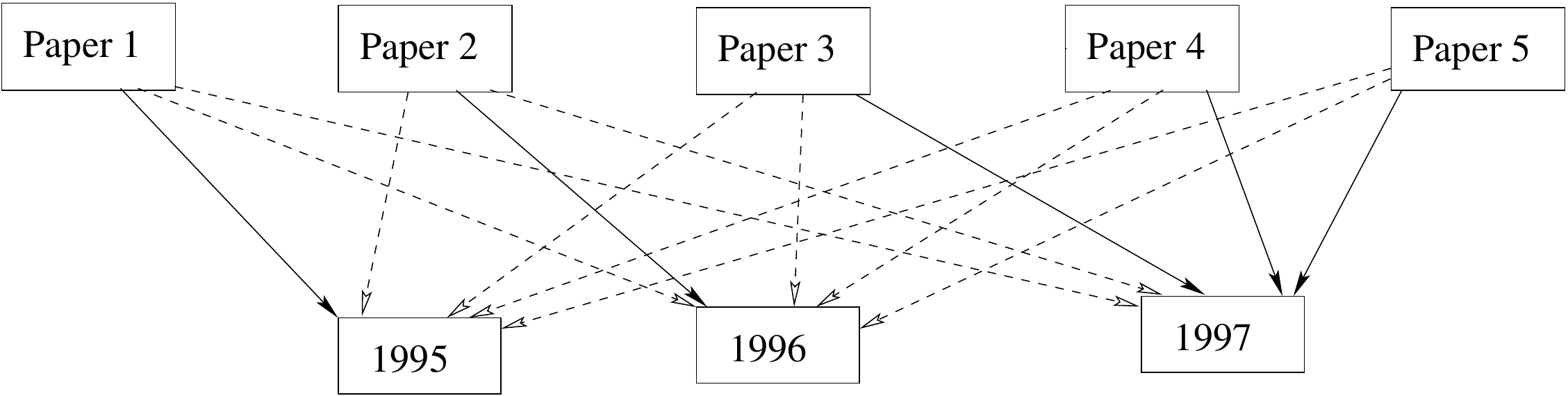}
  \end{center}
  \caption{Example of structure with false links}
  \label{fig:false}
\end{figure}

\begin{Def}[Extension]
  \label{def_f1}
  Let $\Sigma=(G=\langle V,E\rangle, \lambda_1,\lambda_2)$ be a schema  and
  ${\struct}=(S, \lambda_1',\lambda_2')$ a structure, where
  $S=(V',E',\mu)$.
  Then ${\struct}$ is an
  {\em extensional structure} of $\Sigma$
  if there  is a   function (the  {\em extension}) from $\Sigma$ to ${\struct}$,
  $\Inst : V \to Pow(V')$,
  such that:
  \label{condizioni_Rel}
  \begin{enumerate}
  \item
    $\{\Inst(v) : v\in V(G)\}$ is  a partition $\{V_1, \cdots, V_n\}$  of the  set $V'$,

  \item
    for every $x \in V_i, y \in  V_j$, the pair
    $(x,y) \in E'$ iff
    $(\Inst^{-1}(x),  \Inst^{-1}(y)) \in E$;
  \end{enumerate}
\end{Def}

Notice that the first point of the definition of extension implies that the
function $\Inst^{-1}$ is well  defined.
In the following, if  ${\struct}$ is  the extensional structure of  $\Sigma$, then
we write  ${\struct}= \Inst(\Sigma)$ and we will simply say that
${\struct}$ is a structure of $\Sigma$.
Given two vertices $v_1$ and $v_2$ of the schema, connected with a
link $(v_1,v_2)$ then in the structure there must exist all links $(w_1,w_2)$
for $w_1\in \Inst(v_1)$,  $w_2\in \Inst(v_2)$. Such requirement justifies the
introduction of a labeling (and especially of a true-false labeling) in order to
have a reasonable graph-based model.

\begin{Def}[Data base]
  \label{basedati}
  A  {\em data base} $B$  is a pair $(\Sigma,{\struct})$,
  where $\Sigma$ is a schema and
  ${\struct}$ is an extensional structure of  $\Sigma$.
\end{Def}

The schema describes the conceptual organization of the data, while the data
content or instantiation of the data base is given by the extensional structure
associated to the schema.

% \begin{Def}[Substructure]
%   \label{def_substruttura}
%   Let ${\struct}=(S,\lambda_1', \lambda_2')$ be a structure, with
%   $S = (V,E, \mu)$.
%   A  {\em substructure of ${\struct}$ induced by a set $V' \subseteq V$}
%   consists of  the  substructure  $(S', \lambda_1'', \lambda_2'')$,
%   where $S'= (V',E', \mu')$,
%   $E'= V' \times V' \cap E$, $\mu': E' \to \Gamma$ is
%   such that $\mu'(e) =\mu(e)$, for each $e \in E'$,
%   and $\lambda_1''$ and $\lambda_2''$ are the restrictions
%   of $\lambda_1'$ and $\lambda_2'$ to $E'$ and $V'$.
% \end{Def}

It is not hard to notice that, given a schema, there is a one-to-one
correspondence between structures and extension functions, therefore we
will sometimes use the pair $(\Sigma,\Inst)$ as a data base.

Some preliminary definitions are required for introducing our query language.
Given a partial function $f:A\mapsto B$ (i.e. a function where each element of
$A$ can be associated to one or none of the elements of $B$), by $\dom(f)$ we
denote the domain of
$f$, that is the set of elements $x\in A$ such that $f(A)$ is defined.  Let
$f,g$ be two partial functions from the set $A$ to the set $Pow(B)$.  Then {\em
  $f$  is a restriction of $g$}, denoted by $f \leq g$, if $\dom(f)\subseteq \dom(g)$ and for
every $x\in \dom(f)$, $f(x)\subseteq g(x)$.
Moreover by $\Imm(\f)$ we denote the
set obtained as union of all images of elements in $\dom(\f)$:
formally $\Imm(\f) = \cup_{x \in \dom(\f)}\f(x)$.

\begin{Def}[Instance]
  \label{morfi}
  Let $B=(\Sigma,\Inst)$ be a data base.
  An \emph{instance} of $B$ is a restriction $f$ of $\Inst$ such that $\dom(f)$
  induces a weakly-connected subgraph of $\Sigma$.
\end{Def}

The following notations will be used in the rest of the paper.
The set   ${\I}(B)$ is the set of all
instances of $B$.
Let ${\mathcal \I}$ be a subset of ${\I}(B)$, then by  $\Imma({\mathcal \I})$,
we mean the set of nodes of the structure of $B$ that is the union of
all images of instances in ${\I}$, while
$\Dom({\mathcal \I})$
is the union of all domains of instances in ${\I}$.
An element in $\Imma({\I})$ is called a {\em value}, while an element in
$\Dom({\I})$ is called a {\em name}.
Then the image of a name $x \in \Dom({\I})$ is the subset
$A$ of $\Imma({\I})$ such that $A=\cup_{{\f}_i \in {\I}}{\f}_i(x)$.
For a value $y\in\Imma({\mathcal \I})$, the {\em  inverse image of $y$},  denoted
by $\name(y)$, is the  name of $\Dom({\I})$ that
is mapped by $\Inst$,  to a set containing the
element $y$.
Similarly, given a set $A$ of values, the {\em inverse image of $A$}
is the set $\names(A)$ which is union of all inverse images of
the values  in $A$.
%% We refer to the substructure of the structure induced by the image of an instance
%% $\f$  or a set ${\mathcal \I}$ of instances, as the {\em $\f$-structure}
%% or ${\mathcal \I}$-structure.

\subsection{The graph  algebra}
\label{sec:graph-algebra}

Our graph data model  is proposed as a domain-preserving data base, along the
same lines as other papers where the expressiveness of the relational algebra is
studied~\cite{Par78,Banc1}, and
it gives a formal embedding for languages used for the retrieval
of graph-structured information~\cite{LZ}.
The requirement that we are dealing with domain-preserving data bases reflects
in the query language: in fact we have no
operation for creating new elements or modifying the schema graph, and all
operations must preserve the schema and the original structure.

The main consequence of the assumption that our model is domain preserving
consists in the fact that we will deal with a schema  which is mapped to an
instance through an extensional mapping. Therefore there is a complete
equivalence between subgraphs of the structure and restrictions of the
extensional mapping. We are now able to introduce the operations of our graph
algebra: according to our reasoning above we can describe the operation as over
partial functions whenever it allows a simpler formulation.

\begin{Def}[Addition]
  \label{sovrapposizione}
  Let $B=(\Sigma,{\struct})$ be a data base and let $f_1,f_2 \in {\I}(B)$.
  The {\em Addition} of $f_1$ and $f_2$, denoted as $f_1 \oplus f_2$, is the following
  function over domain $\dom(f_1)$. The operation is defined only if $\dom(f_1)=
  \dom(f_2)$:
  $$(f_1\oplus f_2)(x)=f_1(x)\cup f_2(x)$$
\end{Def}

\begin{Def}[Product]
  \label{prodotto}
  Let $B=(\Sigma,{\struct})$ be a data base, and let $f_1,f_2$ be two functions in
  ${\I}(B)$.
  The {\em Product}
  of $f_1$ and $f_2$, denoted as $f_1\otimes f_2$ is the
  instance in ${\I}(B)$ defined as follows:

  \vspace{-1em}
  \begin{equation}
    (f_1 \otimes f_2)(x)
    % \subseteq \left\{ \begin{array}{r@{\quad se\quad}l}
    =\left\{ \begin{array}{r@{\quad }l}
        f_1(x) \cap f_2(x) & \text{if } x \in \dom(f_1) \cap  \dom(f_2), f_1(x) \cap f_2(x) \not= \emptyset\\
        f_2(x) &  \text{if }x \in \dom(f_2) - \dom (f_1)\\
        f_1(x) &  \text{if }x\in \dom(f_1) - \dom(f_2)\\
        \text{undefined} & \text{otherwise}
      \end{array}
    \right.
  \end{equation}
  The product is defined only if $Dom(f_1\otimes f_2)$ induces a weakly-connected
  subgraph of $\Sigma$.
\end{Def}

\begin{Def}[Projection]
  \label{proiezione}
  Let $B=(\Sigma,{\struct})$ be a data base.
  Let $f$ be a function in ${\I}(B)$ and
  let  $A$ be a subset of the domain of $f$,
  such that  $A$ induces in $\Sigma$ a weakly-connected subgraph. The
  {\em projection of $f$ on $A$}, denoted as
  $\Pi_A (f)$, is the instance defined as follows:

  \vspace{-0.5em}
  \begin{equation}
    \Pi_A (f)(x) = \left\{ \begin{array}{r@{ \quad}l}
        f(x) & \text{if }x \in A\\
        \text{undefined} &\text{if } x \notin A
      \end{array}
    \right.
  \end{equation}
\end{Def}

\begin{Def}[Difference]
  \label{differenza}
  Let $B=(\Sigma,{\struct})$ be a data base.
  Let $f_1,f_2$ be two functions in ${\I}(B)$ over the same
  domain $A$.
  The {\em difference of $f_1$
    by  $f_2$}, denoted as $f_1 \ominus f_2$ is the following instance:

  % $$f_1 \ominus f_2 (x) = f_1(x) - f_2(x), \forall x \in A.$$
  \begin{equation}
    f_1 \ominus f_2 (x) = \left\{ \begin{array}{r@{ \quad}l}
        f_1 (x) - f_2 (x) & \text{if } x \in A, f_1(x) - f_2(x) \not=\emptyset\\
        \text{undefined} & \text{otherwise}
      \end{array}
    \right.
  \end{equation}
  The difference is defined only if $Dom(f_1 \ominus  f_2)$ induces a weakly-connected
  subgraph of $\Sigma$.
\end{Def}

Since the coloring of the edges encodes the fact that a relation between two
objects is actual or not, it is natural that the query language has some tools
for exploiting such coloring. In our model we will need to extract instances
where ``similar'' edges are the same color. The definition of \emph{selector} is
the first step in such direction.

\begin{Def}[Selector]
  Let $\Sigma$ be  the schema of a data base $B$.
  Then a {\em selector of $\Sigma$} is a pair $(G, \sigma)$ consisting of
  a weakly-connected subgraph $G$ of $\Sigma$ and a coloring $\sigma: E \to \Gamma$
  of the edges of $G$.
\end{Def}

Querying for a selector in a data base returns all subgraphs of the structure
that are isomorphic to the selector: each such subgraph is indeed called a
\emph{simple instance}.
Moreover, it is natural to  define an operation  of selection that
allows to obtain instances which are compatible with a coloring
of the schema over the alphabet $\Gamma$. This is the last operation of our algebra.

\begin{Def}[Simple instance]
  \label{simple}
  Let $(\Sigma,\Inst)$ be a data base, where
  $\struct=(V',E', \mu)$.
  Let   $(G_s,\sigma)$ be a selector of  $\Sigma$,
  where $G_s=(V(G_s),E(G_s))$.
  Then a {\em simple instance induced by the selector $(G_s, \sigma)$} is
  a restriction $f$ of $\Inst$ such that $\Dom(f)=V$, $|f(v)|=1$ for each
  $v\in\Dom(f)$ and $\mu(\f(x) ,\f(y)) = \sigma(x,y)$ for each  $(x,y) \in E(G_s)$.
\end{Def}

Notice that all simple instaces have the same domain.

\begin{Def}[Selection]%\odot
  \label{selezione}
  Let $B=(\Sigma,{\struct})$ be a data base.
  Let $f$ be a function in ${\I}(B)$
  and $(G_s, \sigma)$  a selector.
  Let ${\mathcal F}$ be the set of all simple instances
  induced by $G_s$ that are also subinstances of $f$.
  The {\em selection of $\f$ by $(G_s, \sigma)$}, denoted as
  $\f | (G_s,\sigma)$, is $\bigoplus_{g \in {\mathcal F}} g$.
\end{Def}

\section{Stability}

Given a set ${\mathcal \I}$ of instances, our first aim is to give a
characterization of all instances
that can be obtained with a query that uses only the information
contained in the instances in ${\mathcal \I}$, or equivalently by an expression of
the algebra that has only instances in ${\mathcal \I}$ as operands.
In such direction the main result of this section is that expressiveness in our
graph algebra is equivalent to the conservation of a certain partition. It is natural to
associate a notion of \emph{undistinguishability} to a partition, where all
elements in a set of the partition are deemed undistinguishable.
We share the goals of~\cite{Par78}, but we have introduced in this paper a new
framework, that  is we are looking for a notion of
expressiveness that is coherent with our meta theorem.
Just as the notion  of {\em automorphisms}, introduced in~\cite{Par78,Banc1} for
relations,
gives a global description of the logical dependencies among data that must be
preserved when querying the data base,  in our model a partition (or an
equivalence relation) will represent all such logical dependencies.
The equivalence partition over elements of the structure that we will
study is called \emph{stability} and is denoted by $\stronguniform{}{}{\I}$
(where $\I$ is an instance).

The simplest possible form of undifferentiation (called \emph{0-stability}) is
based on the idea that we are able to distinguish images of different vertices
of the instance and
vertices of the extensional structure belonging to different functions of
$\I$. Such notion basically consists of using expressions in our algebra that do
not contain any selection.

\begin{Def}
  \label{def:split}
  Let $A$ be a subset of the image of a set ${\I}$ of
  instances.
  Then $A$ is {\em split} by ${\I}$ iff there is a
  function  $f_i$ in ${\I}$ such that
  $A \cap \Imm({\f}_i) \neq \emptyset$ and
  $A - \Imm({\f}_i) \neq \emptyset$.
\end{Def}

We are now able to introduce formally the definition of 0-stability, as follows:

\begin{Def}[0-stable]
  \label{0-stable}
  Let ${\I}$ be a set of instances over a data base $(\Sigma,{\struct})$, and
  let $A$ be a subset  of $\Imma({\mathcal \I})$.
  Then $A$ is \emph{0-stable} w.r.t.  ${\mathcal \I}$, if the two following
  conditions hold:
  \begin{enumerate}
  \item
    $A \subseteq \Imm(\f(x))$, where $\f\in\I$, and $x\in \Dom({\I})$ is a name of
    the schema.
  \item
    for each function  $f\in\I$, then $A$ and $\Imm({\f}_i)$ are disjoint or one is
    contained into the other one.
  \end{enumerate}
\end{Def}

A more refined notion of undifferentiation is called {\em $1$-stability};
informally a set $A$ is 1-stable w.r.t. $B$ if $A$ is 0-stable and $B$ is not
able to distinguish two vertices of $A$ with edges outgoing from $B$ and ingoing
in $A$ (or outgoing from $A$ and ingoing in $B$).
Notice that 1-stability is a binary relation  over
subsets of $\Imma({\I})$, while 0-stability is a unary relation. The formal
definition is:

\begin{Def}[1-stable]
  \label{un_passo}
  Let ${\I}$ be a set of instances over a data base
  $(\Sigma,{\struct})$ and let $A$ and $B$ be two disjoint
  subsets $\Imma({\mathcal \I})$.
  Then $A$ is  {\em 1-stable} w.r.t.  $B$ and  ${\mathcal \I}$,
  denoted as  $\stronguniform{A}{B}{1,\I}$ if the following conditions are verified:
  \begin{enumerate}
  \item
    $A$ is 0-stable w.r.t. ${\I}$;
  \item
    for each edge  $(a_1, b_1)$  of $\struct$, with $a_1\in A, b_1\in B$ and
    for each $a_2\in A$ there exists $b_2\in B$ such that $\mu(a_1, b_1)=\mu(a_2, b_2)$;
  \item
    for each edge  $(b_1, a_1)$  of $\struct$, with $a_1\in A, b_1\in B$ and
    for each $a_2\in A$ there exists $b_2\in B$ such that $\mu(b_1, a_1)=\mu(b_2, a_2)$.
  \end{enumerate}
\end{Def}

Informally 1-stability means that whenever there is an colored edge (say a red
edge) from a vertex of  $A$ to a vertex of $B$, then all vertices of $A$ have a
red edge ingoing in $B$. In other words if we assume that $B$ is
undistinguishable, then also $A$ is undistinguishable, by any single-edge path.
The notion of  1-stability can be further generalized,
but first we need a new definition.

\begin{Def}[Path]
  \label{cammino}
  Let $G=(V,E)$ be a labeled graph.
  Then a {\em colored path} in  $G$ is a pair $(p,s)$
  where $p=<v_1, e_1, v_2, \ldots, v_{l-1}, e_{l-1}, v_l>$,
  and for every $1 \leq i \leq l$, $v_i$ belongs to $V$ and
  $e_i$ is an edge of $G$ such that
  $e_i = (w_i, w_{i+1})$  or $e_i = (w_{i+1},w_i)$.
  Moreover $s$ is the sequence $<\mu(e_1), \ldots,\mu(e_{l -1})>$ where $\mu(e_i)$ is the
  color of the edge $e_i$  in $G$.
\end{Def}

Notice that the definition of path used in the paper is different from the one
that can be usually found in a graph theory textbook, as arcs can also be in the
reverse direction.
The \emph{length} of a path is the number of edges it contains.
Let $B= (\Sigma,{\struct})$ be a data base and let $(p,s)$ be a colored path of
$\struct$, with  $p=<v_1,e_1,v_2, \ldots, e_n,v_{n+1}>$.
Then the {\em path schema} of $(p,s)$
is the pair $(p',s)$, with $p'=<v'_1,e'_1, v'_2, \ldots,e'_n, v'_{n+1}>$, where
for every $1 \leq i \leq {n+1}$, $v'_i=\Inst^{-1}(v_i)$ and
$e'_i\Inst^{-1}(e_i)$.

\begin{Def}
  \label{path-dependencies}
  Let $x, y$ be two nodes of the ${\mathcal \I}$-structure, and let $Z$ be a
  subset of $\Imma({\mathcal \I})$, then the \emph{path dependencies} from $x$ to
  $y$ in $Z$,
  denoted as ${\AM}_{k,Z, {\mathcal \I}}(x,y)$ is  the  set of   path schemata of all paths
  of the ${\mathcal \I}$-structure that are starting in $x$ and ending in $y$ and
  entirely contained in $Z$.
\end{Def}

Informally given $x,y,Z$, their path dependencies is obtained by removing all
vertices not in $Z$, then computing all
possible paths from $x$ to $y$, and finally computing the respective path
schemata.
The next step is to generalize 1-stability to $k$-stability, that is taking into
account length-$k$ paths instead of simple edges (that is length-1 paths).

\begin{Def}[$k$-stable]
  \label{k_passi}
  Let ${\I}$ be a set of instances over a data base
  $(\Sigma,{\struct})$, let $k$ be an integer larger than one, and let $A$ and $B$
  be two disjoint
  subsets of $\Imma({\mathcal \I})$.
  Then $A$ is  {\em $k$-stable} w.r.t.  $B$ and  ${\mathcal \I}$,
  denoted as  $\stronguniform{A}{B}{k\I}$ if the following conditions are verified:
  \begin{enumerate}
  \item
    $A$ is 0-stable w.r.t. ${\I}$;
  \item
    $A$ is $(k-1)$-stable w.r.t.  $B$ and  ${\mathcal \I}$;
  \item
    for each $a_1, a_2 \in A,  b_1\in B$ there exists $b_2\in B$ such that
    $\AM_{k,A\cup B,{\mathcal \I}}  (a_1,b_1) \subseteq \AM_{k,A\cup B,{\mathcal  \I}}  (a_2,b_2)$.
  \end{enumerate}
\end{Def}

The main idea is that when $\stronguniform{A}{B}{k\I}$ then if $B$ is
undistinguishable also $A$ is undistinguishable when only paths no longer than
$k$ are taken
into account. Our main definition follows:

\begin{Def}[Stability]
  \label{rel_collettiva}
  Let  ${\mathcal \I}$ be  a set of instances, and let  $A,B$ be two disjoint
  subsets of $\Imma({\mathcal \I})$. Then $A$ is {\em stable w.r.t. $B$} in
  ${\mathcal \I}$, denoted as $\stronguniform{A}{B}{\I}$,  if
  $\stronguniform{A}{B}{k, \I}$
  for all $k\in \mathbb{N}$.
\end{Def}

By  Def.~\ref{rel_collettiva}, it is immediate
to verify the following properties of stability:

\begin{Lem}
  \label{prop1_rel}
  Let  ${\mathcal \I}$ be a set of instances, and let  $A,B,C \subseteq
  \Imm({\mathcal \I})$, then:
  \begin{enumerate}
  \item
    if $\stronguniform{A}{B \cup C}{\I}$ and  $\names(B) \cap \names(C)=\emptyset$, then
    $\stronguniform{A}{B}{\I}$ and $\stronguniform{A}{C}{\I}$,
  \item
    if  $\names(B) =\names(C)=\{x\}$, $\stronguniform{A}{B}{\I}$ and
    $\stronguniform{A}{C}{\I}$, then $\stronguniform{A}{B\cup C}{\I}$,
  \item
    if  $\names(B) =\names(C)=\{x\}$, $B \cap C \not= \emptyset$, $\stronguniform{B}{A}{\I}$ and
    $\stronguniform{C}{A}{\I}$, then $\stronguniform{B\cup C}{A}{\I}$.
  \end{enumerate}
\end{Lem}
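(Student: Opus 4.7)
Each of the three statements asserts preservation of stability under a set-theoretic operation; since by Definition~\ref{rel_collettiva} we have $\stronguniform{A}{B}{\I}$ iff $\stronguniform{A}{B}{k,\I}$ for every $k$, the natural approach is to fix $k$ and prove preservation of $k$-stability by induction on $k$. The 0-stability clause is checked in the base: in (1) it is inherited directly from $\stronguniform{A}{B\cup C}{\I}$; in (2) the shared name $\names(B)=\names(C)=\{x\}$ forces $B$ and $C$ into a common image of $x$ under some instance, so their union does too; and in (3) the nonempty intersection $B\cap C$ together with the nestedness condition in Definition~\ref{0-stable} rules out $B$ and $C$ sitting in disjoint $\Imm(\f)$'s, leaving $B\cup C$ inside a single $\Imm(\f(x))$.

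The inductive step rests on two facts about the path dependencies $\AM_{k,Z,\I}$: (a) a path schema records, via $\Inst^{-1}$, the name of every visited vertex, including its endpoints; and (b) enlarging $Z$ monotonically enlarges $\AM_{k,Z,\I}(x,y)$. For \textbf{(1)}, given $a_1,a_2\in A$ and $b_1\in B$, I apply the hypothesis in $A\cup B\cup C$ to obtain a witness $b'\in B\cup C$; by (a) the matching schema forces $\names(b')=\names(b_1)\in\names(B)$, so the disjointness $\names(B)\cap\names(C)=\emptyset$ gives $b'\in B$, and the same argument applied to intermediate vertices shows that matching paths stay in $A\cup B$, so the inclusion descends to $\AM_{k,A\cup B,\I}$ and yields $\stronguniform{A}{B}{k,\I}$ (symmetrically for $C$). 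For \textbf{(2)} and \textbf{(3)} the key is a schema-substitution observation: since $B$ and $C$ share the single name $x$, a $B$-vertex and a $C$-vertex are indistinguishable by path schemas, so every path in $A\cup B\cup C$ is schema-equivalent to one staying entirely in $A\cup B$ (or entirely in $A\cup C$). With this, for (2) one takes $b_1\in B\cup C$, say $b_1\in B$, invokes $\stronguniform{A}{B}{k,\I}$ to get $b_2\in B\subseteq B\cup C$, and extends the containment from $A\cup B$ to $A\cup B\cup C$ by substitution; for (3) one case-splits on $u_1\in B$ versus $u_1\in C$ and uses the bridge element in $B\cap C$ to combine the two cases compatibly.

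The hardest step is the schema-substitution claim underlying (2) and (3): a path weaving between $B$- and $C$-vertices (both of name $x$) must be replaceable by a schema-equivalent path staying on one side. I would isolate this as a small auxiliary lemma, proved by induction on the length of the weaving path, using that two vertices with the same name under the same instance are structurally interchangeable in the extensional structure by Definition~\ref{def_f1}; plugging this into the main induction on $k$ then delivers all three parts.
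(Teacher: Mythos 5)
The paper does not actually prove this lemma---it is introduced only with ``it is immediate to verify'' and no argument follows---so there is nothing to compare your proof against line by line. Your overall plan, descending to $k$-stability via Definition~\ref{rel_collettiva} and verifying the clauses of Definition~\ref{k_passi} level by level, is certainly the intended route, and your treatment of the $0$-stability base cases (in particular using $B\cap C\neq\emptyset$ together with the nestedness clause of Definition~\ref{0-stable} in part~(3)) is correct.

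The gap is in the auxiliary ``schema-substitution'' lemma on which you make parts (2) and (3)---and, implicitly, part (1)---rest. You justify it by saying that two vertices with the same name are ``structurally interchangeable in the extensional structure by Definition~\ref{def_f1}''. But Definition~\ref{def_f1} only forces the \emph{existence} of an edge of the structure whenever the corresponding schema edge exists; it says nothing about the color $\mu$, and the whole point of the $\{true,false\}$ coloring is that edges with the same schema preimage may carry different colors. Since a colored path schema (Definition~\ref{cammino}) retains the color sequence $s$, a $B$-vertex and a $C$-vertex sharing the name $x$ are \emph{not} in general indistinguishable by path schemata: take $B=\{b\}$, $C=\{c\}$ with $\name(b)=\name(c)=x$, $\mu(a,b)=true$ and $\mu(a,c)=false$. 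So the substitution lemma is false as stated, and the interchangeability you need must be extracted from the stability hypotheses themselves (e.g., in (3), by chaining the path-dependency containments coming from $\stronguniform{B}{A}{\I}$ and $\stronguniform{C}{A}{\I}$ through a bridge element of $B\cap C$, as you suggest), not from the schema. A second, smaller instance of the same problem occurs in part (1): after pinning the witness $b_2$ inside $B$ by its endpoint name, you still need to reroute the matching path from $A\cup B\cup C$ into $A\cup B$; since $A$ is $0$-stable all its vertices carry a single name, and nothing in the hypotheses prevents $C$ from containing \emph{other} vertices with that same name, so ``matching paths stay in $A\cup B$'' does not follow from names alone either. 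None of this shows the lemma is false, but the one tool you propose for the hard steps does not do the work, so the proof is incomplete exactly where it matters most.
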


Stability is a relation between disjoint subsets of the domain. The definition
of expressiveness in the query language that we want to obtain is based on
partitions, and now we are able to introduce the class of partitions we are
interested into.
A  partition ${\mathcal P}$ of nodes of the structure is called \emph{valid} if
and only if for each set $A$ of the partition and every set $B$ that is a
union of sets of  ${\mathcal P}$, then $B$ cannot differentiate $A$.

\begin{Def}
  \label{partizione_accettabile}
  Let   ${\mathcal \I}$ be a set of instances.
  A partition ${\mathcal P}=\{P_1, \ldots,P_k\}$ of
  $\Imm({\mathcal \I})$ is  {\em valid} if  for every $P_i\in {\mathcal P},\,
  L\subset \{1,\ldots ,k\},\, L\neq \emptyset ,\, i\notin L$, then
  $\stronguniform{P_i}{\bigcup_{l\in L} P_l}{\I}$.
\end{Def}

Given a set ${\I}$ of instances, then there may be various
valid partitions of $\Imma({\mathcal \I})$, and at least one valid partition
always exists (the partition where each vertex of the extentional structure is a set).
Some valid partitions are more representative of the actual undifferentiation,
in fact we will  assume  as a measure of the undifferentiation
induced by ${\mathcal \I}$  the coarsest valid partition, which
we will call {\em canonical partition} and denote as ${\mathcal C}_{\mathcal \I}$.
We can show that the definition of canonical partition is well-formed.

\begin{Thm}
  Every set ${\mathcal \I}$ of instances has a unique
  canonical partition  ${\mathcal C}_{\mathcal \I}$.
\end{Thm}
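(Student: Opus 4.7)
The plan is to realize $\mathcal{C}_\mathcal{I}$ as the maximum element of the poset $\mathcal{V}$ of valid partitions of $\Imm(\mathcal{I})$ under refinement, by proving that $\mathcal{V}$ is closed under the partition-lattice join; finiteness of $\Imm(\mathcal{I})$ then forces both the existence and the uniqueness of a coarsest valid partition.

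First I would check that $\mathcal{V}$ is nonempty: the partition into singletons is valid, because every singleton $\{y\}$ is $0$-stable ($y$ lies in $f(\name(y))$ for any witnessing instance $f \in \mathcal{I}$, and $\{y\}$ is automatically disjoint from or contained in $\Imm(f)$ for every $f$), and the $k$-stability conditions between two singletons are vacuous since $A = \{a_1\}$ forces $a_2 = a_1$ in Definition~\ref{k_passi}(3), so $b_2 := b_1$ always works.

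The technical heart is the join-closure claim: if $\mathcal{P}, \mathcal{Q} \in \mathcal{V}$ then $\mathcal{P} \vee \mathcal{Q} \in \mathcal{V}$. A class $C$ of the join is a connected component of the bipartite overlap graph on $\mathcal{P}$-classes and $\mathcal{Q}$-classes, so $C$ admits a zig-zag decomposition $A_1 \cup B_1 \cup A_2 \cup B_2 \cup \cdots$ in which each successive piece (alternately a $\mathcal{P}$-class and a $\mathcal{Q}$-class) overlaps the preceding union. Each piece is $0$-stable and therefore carries a single name; by overlap, all of $C$ shares the same name $x$. For any union $D$ of other join-classes, $D$ is simultaneously a union of $\mathcal{P}$-classes and of $\mathcal{Q}$-classes that are disjoint from $C$, so validity of $\mathcal{P}$ and $\mathcal{Q}$ yields $\stronguniform{A_i}{D}{\mathcal{I}}$ and $\stronguniform{B_j}{D}{\mathcal{I}}$ for every $i, j$. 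Iterating Lemma~\ref{prop1_rel}(3) along the zig-zag — each merge combining two sets that share the name $x$ and overlap nontrivially — yields $\stronguniform{C}{D}{\mathcal{I}}$. Then $\mathcal{C}_\mathcal{I} := \bigvee_{\mathcal{P} \in \mathcal{V}} \mathcal{P}$ is valid and coarser than every member of $\mathcal{V}$, and uniqueness is immediate: any other coarsest valid $\mathcal{P}'$ satisfies $\mathcal{P}' = \mathcal{P}' \vee \mathcal{C}_\mathcal{I} = \mathcal{C}_\mathcal{I}$ by maximality.

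The main obstacle is verifying condition (1) of Definition~\ref{0-stable} for the merged class $C$, which requires $C$ to sit inside $h(x)$ for a \emph{single} instance $h \in \mathcal{I}$; since the pieces $A_i, B_j$ may witness different instances, one must invoke condition (2) of Definition~\ref{0-stable} — forcing each $\Imm(f)$ to be either disjoint from or to contain each piece — together with the chain of overlaps to locate a common instance along the whole zig-zag, before Lemma~\ref{prop1_rel}(3) can be iterated cleanly and the $k$-stability conditions lifted uniformly for all $k$.
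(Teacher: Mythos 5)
Your proposal is correct and follows essentially the same route as the paper: the paper also forms the join (transitive closure of the union) of valid partitions, shows each join-class is a union of classes of both partitions sharing a single name, and iterates Lemma~\ref{prop1_rel} to establish validity of the join, from which uniqueness of the coarsest valid partition follows. Your version merely packages this as join-closure of the whole poset of valid partitions rather than as a contradiction between two putative coarsest ones, and spells out the zig-zag and $0$-stability details a bit more explicitly than the paper does.
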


\begin{proof}
  Clearly the partition of $\Imma({\mathcal \I})$ into singletons is a valid
  partition, so there exists at least one canonical partition. Now assume to the
  contrary that there exist two coarsest valid partitions ${{\mathcal P}_1}$ and
  ${{\mathcal P}_2}$.
  Let $R_1$ and $R_2$ be the equivalence relations induced by the partitions
  ${{\mathcal P}_1}$ and ${{\mathcal P}_2}$, respectively.
  Let $R^*$ be the  transitive closure of the relation $R$
  defined as follows:
  $x R y$ if and only if $x$ and $y$ are in the same set of ${{\mathcal P}_1}$ or
  ${{\mathcal P}_2}$.  We can prove that the partition
  ${\mathcal P}$ induced by $R^*$ is a valid one of index strictly less than $k$.
  By construction of $R^*$ each set of ${\mathcal P}$ is a union of sets in
  ${{\mathcal P}_1}$ and also a union of sets in ${{\mathcal P}_2}$, moreover each
  set in ${\mathcal P}$ is contained in the image of a single name (since each set
  must be 0-stable). Notice that $R^*\neq {\mathcal P}_1$ iff ${\mathcal
    P}_1\neq {\mathcal P}_2$.

  Let $X_i$ be a set of  ${\mathcal P}$, and let $Z_k$ be a class of
  ${{\mathcal P}_1}$ contained in $X_i$. Since  ${{\mathcal P}_1}$ is a valid
  partition, and $X_i$ is a union of disjoint sets of ${\mathcal P_1}$, by
  Lemmata~\ref{rel_collettiva},~\ref{prop1_rel}, we have that
  $\stronguniform{X_i}{Z_i}{\I}$ and
  $\stronguniform{Z_i}{X_i}{\I}$. Let $X_i$, $X_j$ be two sets of  ${\mathcal P}$, with
  $X_i=Z_{i_1}\cup\cdots Z_{i_k}$ and $X_j=Z_{j_1}\cup\cdots Z_{i_l}$. We have
  already proved that $\stronguniform{X_i}{Z_j}{\I}$ and  $\stronguniform{Z_j}{X_i}{\I}$,
  applying again Lemmata~\ref{rel_collettiva},~\ref{prop1_rel} and noting that
  $X_j=Z_{j_1}\cup\cdots Z_{i_l}$ we obtain $\stronguniform{X_i}{X_j}{\I}$, By the
  generality of $X_i$ and $X_j$ the partition  ${\mathcal P}$ is valid. Moreover
  ${\mathcal P}$ is coarser than  ${\mathcal P_1}$, which is a contradiction.
\end{proof}

% \begin{EX}
%   Canonical partitions and valid partitions
% \end{EX}

\section{Expressiveness}

In this section we will prove our main result regarding the expressiveness of
the graph-based query language by showing that  a function can be computed if
and only if adding such function does not change the canonical partition. In the
following we will assume that the union of the images of all functions in $\I$
is exactly the universe set; such assumption does not violate the generality
since otherwise we would simply have some sets of the canonical partition whose
union consists of exactly those elements of the universe set that are not in any
function in $\I$.

\begin{Thm}
  \label{graal-expressivess}
  Let $\BI({\mathcal \I})$ be the set of functions that are a result of an
  expression of the graph algebra where only
  functions of ${\mathcal \I}$ are operand. Then $f\in \BI({\mathcal \I})$ if and
  only if the canonical partition induced by ${\mathcal \I}$ is equal to the
  canonical partition induced by ${\mathcal \I}\cup \{f\}$, that is ${\mathcal
    C}_\I={\mathcal C}_{\I\cup \{f\}}$.
\end{Thm}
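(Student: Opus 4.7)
\medskip
\noindent\textbf{Proof plan.} The plan is to prove the two implications separately, following the template established for the relational case (Theorem~\ref{teo:Paredaens2} and its orbit/cycle reformulations). In the forward direction I will use structural induction on algebra expressions to show that every expressible instance respects $\mathcal{C}_{\I}$, whence $\mathcal{C}_{\I}$ remains valid for $\I\cup\{f\}$; by maximality of the canonical partition (and uniqueness, just established), $\mathcal{C}_{\I}=\mathcal{C}_{\I\cup\{f\}}$. In the backward direction I will construct an expression realizing $f$ by exhibiting, for each class of the canonical partition, an instance whose image is exactly that class, and then combining these with addition and projection to reproduce $f$.

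\medskip
\noindent\textbf{Forward direction ($\Rightarrow$).} Suppose $f = M_E(\I)$ for some algebra expression $E$. I will prove by induction on the structure of $E$ that $\Imm(f)$ is a union of classes of $\mathcal{C}_{\I}$ and, more generally, that $f$ respects the stability relation $\hookrightarrow_{\I}$ in the sense required for $\mathcal{C}_{\I}$ to be valid for $\I\cup\{f\}$. The base case ($E$ an instance of $\I$) follows immediately from 0-stability of each class. For each inductive step I verify the corresponding preservation statement: addition and projection are immediate since they do not introduce any new value; product and difference preserve the property because intersection and set difference between unions of canonical classes remain unions of canonical classes; the critical case is selection, where I must show that if $\stronguniform{A}{B}{\I}$ for every canonical class $A$ inside $\Imm(f_1)$ and every $B$ that is a union of classes, then the set of simple instances induced by a selector $(G_s,\sigma)$ that are sub-instances of $f_1$ is invariant across classes. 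This follows from the definition of $k$-stability applied to the path schemata induced by $(G_s,\sigma)$, which have bounded length $|E(G_s)|$. Concluding, $\mathcal{C}_{\I}$ is valid for $\I\cup\{f\}$, so $\mathcal{C}_{\I\cup\{f\}}$ is at least as coarse as $\mathcal{C}_{\I}$; the reverse inequality is trivial since adding $f$ can only refine the partition, giving equality.

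\medskip
\noindent\textbf{Backward direction ($\Leftarrow$).} Suppose $\mathcal{C}_{\I}=\mathcal{C}_{\I\cup\{f\}}$. Since $\mathcal{C}_{\I\cup\{f\}}$ is valid with respect to $\I\cup\{f\}$, each class $P$ of $\mathcal{C}_{\I}$ satisfies $P\cap\Imm(f)=\emptyset$ or $P\subseteq\Imm(f)$, and more strongly, for each name $x\in\dom(f)$ and each class $P\subseteq\Inst(x)$, either $P\subseteq f(x)$ or $P\cap f(x)=\emptyset$. Thus $f$ is determined, on each name, by which canonical classes it selects. It therefore suffices to show that for every class $P$ of $\mathcal{C}_{\I}$ contained in some $\Inst(x)$ there is an expression $E_P$ in the algebra of $\I$ whose result is a valid instance taking value $P$ at $x$ (over an appropriate weakly-connected sub-schema). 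Granted this, I assemble $f$ by applying projection $\Pi$ to restrict to the single name $x$ where necessary and then combining via Addition over all $(x,P)$ with $P\subseteq f(x)$, extending back to the weakly-connected domain of $f$ using Product with a suitable context instance (all of which exist because $\dom(f)$ induces a weakly-connected subgraph of $\Sigma$). To construct $E_P$, I exploit that $\mathcal{C}_{\I}$ is the \emph{coarsest} valid partition: any two vertices in distinct classes must be distinguishable by some path schema of bounded length, hence by some selector $(G_s,\sigma)$ applied to an instance of $\I$. Iterating Selection, Product and Difference on such distinguishing selectors, I can successively remove values outside $P$ until exactly $P$ remains at $x$; the process terminates because at each step the canonical-partition refinement strictly decreases.

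\medskip
\noindent\textbf{Main obstacle.} The hardest and most technical step will be the constructive selector argument in the backward direction. In the relational setting this is bypassed by directly producing the cogroup-relation (Theorem~\ref{teo:AutR=cgR}) as a single expressible object that encodes all the relevant symmetries; here no such canonical object is available, and one must instead argue combinatorially that two values are in the same canonical class if and only if they agree on every path-schema signature realizable as a selector, and that the ``separation'' of non-equivalent values can always be performed while preserving the weak-connectedness requirement baked into the definitions of Product, Projection, Difference, and Selection. Managing this connectedness bookkeeping, together with proving termination of the iterative refinement, is where the proof will have to be most delicate.
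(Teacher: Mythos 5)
Your forward direction is sound and coincides with the paper's: the structural induction you describe is Proposition~\ref{f_1}, and validity of $\mathcal{C}_{\I}$ with respect to $\I\cup\{f\}$ indeed reduces to $0$-stability of each class against $\Imm(f)$, since the path-dependency conditions live in the structure and do not change when $f$ is added. The two comparisons (coarser because $\mathcal{C}_{\I}$ is valid for the larger set, finer because every partition valid for $\I\cup\{f\}$ is valid for $\I$) then close that half correctly.

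The gap is in the backward direction, precisely at the step you flag as the main obstacle. You assert that any two vertices in distinct classes of $\mathcal{C}_{\I}$ are distinguishable by a bounded-length path schema, hence by a selector applied to an instance of $\I$, and you attribute this to $\mathcal{C}_{\I}$ being the \emph{coarsest} valid partition. Coarsest-ness only tells you that merging two canonical classes destroys validity, i.e.\ that $\stronguniform{A}{B}{\I}$ fails for some class/union-of-classes pair of the merged partition; to convert a failure of $k$-stability into an \emph{expressible} separating query you must apply the distinguishing selectors to an instance $g\in\BI(\I)$ with $\Imm(g)=A\cup B$, and the existence of such a $g$ is exactly what your iterative construction is trying to establish --- the argument is circular as written. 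The paper breaks the circle by introducing the auxiliary partition $\PBI{\I}$ (indistinguishability by expressible instances), for which class-isolating and union-realizing expressions exist essentially by definition (Lemmas~\ref{minimo_ottenibile} and~\ref{uni2}); it then proves $\PBI{\I}$ is valid (Lemma~\ref{Pbi_accettabile}, via the one-shot expression $h=\bigoplus_{G_s\in\AM_{A\cup B,\I}(x,z)}\Pi_a(g|G_s)$ built from a stability-failure witness) and concludes $\PBI{\I}=\mathcal{C}_{\I}$ by sandwiching with Proposition~\ref{f_1} (Lemma~\ref{Pd=Pbi}). Only after that equality does ``distinct canonical classes are expressibly separable'' become available. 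Your iterate-and-terminate refinement is likewise unnecessary in that route (classes are isolated in a single Addition over all expressible functions they contain); if you keep the direct route, you must supply a non-circular proof of the separation claim, and your plan currently contains no such argument.
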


%% An important consequence of Theorem~\ref{graal-expressivess} is that a subset
%% $S$ of $\Imm(\mathcal \I)$ is the image of a function that is expressible in the algebra
%% if and only if $S$ is the union of some sets of the canonical partition.

The following two properties, that are consequences of
Def.~\ref{partizione_accettabile}, will be useful to
prove the main result of the paper.

%% \begin{Lem}
%%   \label{unif}
%%   Let ${\mathcal \I}$ be a set of instances and
%%   ${\mathcal P}_{\mathcal \I}$ a valid partition of ${\mathcal \I}$.
%%   Then the image of every instance $\f \in {\mathcal \I}$
%%   is union of sets of the valid partition.
%% \end{Lem}

\begin{Prop}
  \label{f_1}
  Let ${\mathcal \I}$ be a set of instances and let ${\mathcal P}$
  be a valid  partition.
  Then the image of  every instance $\g \in \BI({\mathcal \I})$ is
  the union of sets of ${\mathcal P}$.
\end{Prop}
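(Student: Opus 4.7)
The plan is to proceed by structural induction on the expression $E$ defining $g$ from the operands in $\mathcal{I}$. To make the induction pass cleanly through projections and selections, I would strengthen the statement to: for every $g \in \BI(\mathcal{I})$ and every name $x \in \dom(g)$, the value $g(x)$ is a union of classes of $\mathcal{P}$. The proposition follows at once because $\Imm(g) = \bigcup_{x \in \dom(g)} g(x)$ is then itself a union of classes.

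For the base case $g \in \mathcal{I}$, I would use that every class $P_i$ is $0$-stable with respect to $\mathcal{I}$ (a consequence of validity, since stability implies $0$-stability). This forces, for every $g \in \mathcal{I}$ and $x \in \dom(g)$, that $P_i$ and $g(x)$ are either disjoint or satisfy $P_i \subseteq g(x)$, hence $g(x)$ is the union of those classes it meets. For the four structural operations I would perform a short case analysis using the inductive hypothesis: $(f_1 \oplus f_2)(x) = f_1(x) \cup f_2(x)$; $(f_1 \otimes f_2)(x)$ is either $f_1(x) \cap f_2(x)$ or one of $f_1(x),f_2(x)$ according to which domains contain $x$; $(f_1 \ominus f_2)(x) = f_1(x) \setminus f_2(x)$; and $\Pi_A(f)(x) = f(x)$. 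Unions, intersections, and set differences of pairwise disjoint classes are again unions of classes, so the invariant propagates.

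The main obstacle is the selection operation $f \mid (G_s, \sigma)$, where $(f \mid (G_s, \sigma))(x)$ is the set of values that appear as the $x$-component of some simple instance $h$ matching the selector $(G_s, \sigma)$ and which is a sub-instance of $f$. To show this set is a union of classes, I would fix $v_1 \in (f \mid (G_s,\sigma))(x) \cap P_i$ together with a witnessing simple instance $h$, and for an arbitrary $v_2 \in P_i$ build a matching simple instance $h'$ with $h'(x) = \{v_2\}$. The construction would traverse $G_s$ starting from $x$, invoking at each edge the full stability $\stronguniform{P_i}{B}{\I}$ with $B$ the union of the classes met by the remaining values of $h$: for every colored edge from $v_1$ to some $w \in B$, stability furnishes a same-colored edge from $v_2$ to a vertex $w'$ lying in the same class as $w$. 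The delicate step is to keep this replacement globally consistent across the whole selector subgraph, since edge-by-edge substitutions need not automatically compose; the argument would therefore invoke $k$-stability for $k$ equal to the diameter of $G_s$, together with Lemma~\ref{prop1_rel} to combine the stability statements for the various classes appearing along the selector. This yields $v_2 \in (f \mid (G_s,\sigma))(x)$, hence $P_i \subseteq (f \mid (G_s,\sigma))(x)$, closing the induction.
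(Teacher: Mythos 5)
Your proposal follows the same skeleton as the paper's proof: induction on the structure of the expression, the base case settled by $0$-stability of the classes of a valid partition, the four operations $\oplus,\otimes,\ominus,\Pi_A$ dispatched by elementary set algebra, and selection as the only substantive case. Two divergences are worth recording. First, you strengthen the inductive invariant to the per-name values $g(x)$; the paper works only with $\Imm(g)$, which already suffices for the statement, while your version leans implicitly on the fact that each class lies inside the extension of a single name (condition 1 of $0$-stability, later packaged as Proposition~\ref{unique1}). That strengthening is harmless and arguably makes the $\otimes$ and projection cases cleaner. Second, and more substantively, for selection the paper argues by contradiction: taking $x\in A\cap\Imm(\g)$ and $y\in A\setminus\Imm(\g)$, it invokes $\stronguniform{A}{\Imm(\f_1)\setminus A}{\I}$ to obtain the containment $\AM_{\Imm(\f_1),\I}(x,z)\subseteq\AM_{\Imm(\f_1),\I}(y,v)$ of path-schema \emph{sets}, and from this concludes that $y$ also lies in the image of some simple instance. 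You instead construct the matching simple instance for $v_2$ directly. The ``global consistency'' worry you flag is genuine, and it is precisely what the $\AM$ sets are designed to absorb: they record the schemata of \emph{all} paths between two vertices inside $A\cup B$, of every length, and such paths may revisit vertices. For that reason your plan to invoke $k$-stability only for $k$ equal to the diameter of $G_s$ is too weak --- full stability (all $k$) is what Definition~\ref{rel_collettiva} supplies and what the paper's containment of $\AM$ sets encodes; replacing the diameter bound by full stability, your construction and the paper's contradiction become two renderings of the same argument, and both leave the final step, from preservation of path schemata to the existence of an isomorphic copy of the entire selector subgraph, at the same level of detail.
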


\begin{proof}
  We prove the lemma by induction on the number $n$ of operations of the
  expression for $\g$.
  If  $n=0$, then $\g$ is a function in $\mathcal \I$.
  Since all sets of a valid partition are $0$-stable, no set of a valid partition
  has both an element in $\Imm(\g)$ and an element not in $\Imm(\g)$, therefore
  the union of all sets of ${\mathcal P}$ that are contained in
  $\Imm(\g)$ is contained in $\Imm(\g)$. To prove that such containment is not
  strict (i.e. such union is equal to  $\Imm(\g)$) it is sufficient to note that
  all elements of $\Imm(\g)$ belong to some set of ${\mathcal P}$.

  Assume now that $n > 0$ and $\g$ is obtained by the application
  of an operation to two expressions  $\f_1$ and $\f_2$ in $\BI({\mathcal \I})$,
  or one expression for a selection.
  Clearly, by inductive hypothesis the images of $\f_1$ and $\f_2$
  are obtained as the union of some sets of ${\mathcal P}$.
  It is immediate to verify the lemma for the case that
  $\g=\f_1 \oplus \f_2$,
  $\g=\f_1 \otimes  \f_2$,
  $\g=\f_1 \ominus \f_2$ and
  $\g=\Pi_A(\f_1)$.
  Finally, assume that $\g= \f_1|G_s$, where $G_s$ is a selector.
  By definition of the selection, then $\Imm(\g)$ is the union of
  the images of all simple instances induced by $G_s$.
  Assume to the contrary that there exists a set $A$ of the valid
  partition such that $A \not\subseteq \Imm(\g)$ and $A \cap \Imm(\g) \not= \emptyset$.
  Now, let $y \in A - \Imm(\g)$ and $x \in A \cap \Imm(\g)$. Hence, by
  construction of  selection,
  $y$ cannot be in the image of any simple instance induced by $G_s$, while $x$ is
  contained in the image of a simple instance induced by
  $G_s$.
  By inductive hypothesis  $\Imm(\f_1)$ is union of sets of  ${\mathcal P}$,
  moreover since $A$ is a set of  ${\mathcal P}$, also $\Imm(\f_1) - A$ is union
  of sets of the valid partition, implying that $\stronguniform{A}{\Imm(\f_1) - A}{\I}$,
  It follows that, for each $z \in \Imm(\f_1) - A$,
  $\AM_{\Imm(\f_1), {\mathcal \I}}(x,z) \subseteq \AM_{\Imm(\f_1), {\mathcal \I}}(y,v)$
  for some $v \in \Imm(\f_1) - A$.
  This implies that there is a simple instance induced by $G_s$ that
  has in its image $y$, which is a contradiction with
  the above assumption.
  Consequently, the image of $\g$ must be union of sets of ${\mathcal P}$.
\end{proof}

We will prove that  an alternative characterization of
canonical partition is as the partition induced by the equivalence relation
$R_\I$ between elements of $\Imm(\I)$, where $x R_{\I} y$ if and only if
for every instance $\f \in \BI({\mathcal \I})$, $x\in\Imm(\f) \Leftrightarrow y\in \Imm(\f)$.
In the following of the paper let $\PBI{\I}$ denote the partition induced by the
equivalence relation $R_\I$.
Successively we will prove that a function $\f$ belongs to $\BI({\mathcal \I})$
if and only if $\Imm(\f)$ can be obtained as union of sets of  $\PBI{\I}$,
completing the proof of our main result, in two steps. First we will
prove that a function $f$ belongs to $ \BI({\I})$ if and only if  $\Imm(f)$ is
union of sets in $\PBI{\I}$, then we will prove that  $\PBI{\I}=\mathcal{C_\I}$.
The following proposition is an immediate consequence of the definitions of
$x R_{\I} y$ and of projection.

\begin{Prop}
  \label{unique1}
  Let $x,y\in\Imm(\I)$ such that  $x R_{\I} y$. Then both $x$ and $y$ belong to the
  set $\Inst(z)$ for some name $z$.
\end{Prop}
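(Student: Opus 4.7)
The plan is to prove Proposition~\ref{unique1} by contraposition: if $x$ and $y$ lie in different blocks of the partition $\{\Inst(v) : v \in V(G)\}$ of $V'$ guaranteed by condition~1 of Definition~\ref{def_f1}, then some instance of $\BI(\I)$ separates them, so $x R_\I y$ fails. Recall that by that partition property, each element of $\Imm(\I)$ belongs to a \emph{unique} set of the form $\Inst(v)$; so the statement is really saying that $R_\I$-related values share their (unique) name.

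First I would pick a witness $f_0 \in \I \subseteq \BI(\I)$ with $x \in \Imm(f_0)$, which exists since $x \in \Imm(\I) = \bigcup_{f\in\I}\Imm(f)$. Let $z_x \in \dom(f_0)$ be the unique name such that $x \in f_0(z_x)$. Because $f_0$ is a restriction of $\Inst$ we have $f_0(z_x) \subseteq \Inst(z_x)$, and so $x \in \Inst(z_x)$; the uniqueness of $z_x$ now follows from the disjointness of the blocks $\Inst(v)$. Similarly, let $z_y$ be the unique name with $y \in \Inst(z_y)$. Suppose, for contradiction, that $z_x \neq z_y$.

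Next I would build a distinguishing instance by projection. Since the singleton $\{z_x\}$ trivially induces a weakly-connected subgraph of $\Sigma$ and $z_x \in \dom(f_0)$, the projection $g := \Pi_{\{z_x\}}(f_0)$ is a well-defined instance in $\BI(\I)$ by Definition~\ref{proiezione}. By that same definition, $\Imm(g) = g(z_x) = f_0(z_x) \subseteq \Inst(z_x)$. In particular $x \in \Imm(g)$ while $y \notin \Imm(g)$, because $y \in \Inst(z_y)$ is disjoint from $\Inst(z_x)$. This contradicts $x R_\I y$, and so forces $z_x = z_y$, which is precisely the conclusion.

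The argument is essentially a one-step use of projection, and there is no real obstacle; the only point worth flagging is that the projection operation is applicable here, which in turn rests on the trivial fact that any singleton subset of $\dom(f_0)$ induces a weakly-connected subgraph of $\Sigma$. Everything else is a direct bookkeeping of the partition property of $\Inst$ and the inclusion $\I \subseteq \BI(\I)$.
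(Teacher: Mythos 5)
Your proof is correct and matches the paper's intent exactly: the paper gives no written proof, asserting only that the proposition is ``an immediate consequence of the definitions of $x R_{\I} y$ and of projection,'' and your singleton-projection argument (separating $x$ from $y$ via $\Pi_{\{z_x\}}(f_0) \in \BI(\I)$ when their names differ) is precisely the one-step argument being alluded to. No gaps; the points you flag --- that a singleton induces a weakly-connected subgraph and that $\I \subseteq \BI(\I)$ --- are the only hypotheses needed and both hold.
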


\begin{Cor}
  \label{unique}
  Let $P^{\BI}_{\mathcal \I}$ be the  partition induced by a set ${\mathcal \I}$
  of instances.
  Then, the inverse image of every set of the partition consists
  of a single  vertex.
\end{Cor}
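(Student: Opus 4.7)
The plan is to deduce the corollary directly from Proposition~\ref{unique1} together with the fact that the extension $\Inst$ partitions the vertex set of the structure (condition 1 of Definition~\ref{def_f1}), which guarantees that each value $x \in \Imma(\I)$ has a \emph{unique} name, i.e., $\name(x)$ is well defined as a single element of the schema.

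First I would fix an arbitrary class $P$ of $\PBI{\I}$ and pick any two values $x, y \in P$. By definition of the partition $\PBI{\I}$, the elements $x$ and $y$ are related by $R_\I$, so Proposition~\ref{unique1} applies and yields a single name $z$ of the schema with $x, y \in \Inst(z)$. Since $\{\Inst(v) : v \in V(G)\}$ is a partition of the vertex set of the structure, a value belongs to $\Inst(v)$ for at most one $v$, hence $\name(x) = \name(y) = z$.

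Next I would observe that $z$ does not depend on the chosen pair: for any other $y' \in P$ we again have $x R_\I y'$, so the same reasoning forces $\name(y') = \name(x) = z$. Therefore $\names(P) = \bigcup_{w \in P} \name(w) = \{z\}$, a single vertex of the schema, which is exactly the statement of the corollary.

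The argument is essentially bookkeeping: there is no real obstacle, since all the nontrivial content has been absorbed into Proposition~\ref{unique1}. The only point to be careful about is invoking condition 1 of the definition of extension to conclude uniqueness of the name, so that taking the union over the class $P$ collapses to a singleton rather than merely to a subset of a single schema vertex.
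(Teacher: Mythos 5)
Your proof is correct and follows exactly the route the paper intends: the corollary is stated as an immediate consequence of Proposition~\ref{unique1}, and your argument simply applies that proposition pairwise within a class of $\PBI{\I}$ and uses condition~1 of Definition~\ref{def_f1} to conclude that the common name is unique, so $\names(P)$ collapses to a singleton. The paper leaves this bookkeeping implicit; you have just written it out.
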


\begin{Lem}
  \label{minimo_ottenibile}
  Let ${\mathcal \I}$ be a set of instances over $B$,  let $\PBI{\I}$
  be the partition induced by ${\I}$, and let $A\in P^{\BI}_{\mathcal \I}$. Then
  there exists an instance
  $\f \in \BI({\mathcal \I})$ such that $\Imm (\f)=A$.
\end{Lem}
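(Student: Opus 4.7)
The plan is to realize $A$ as an atom of the Boolean algebra generated by the images, at a single schema vertex, of all instances in $\BI(\I)$, and then to build this atom explicitly using $\otimes$, $\oplus$, $\ominus$, and $\Pi$. First I would fix any $a\in A$. By Proposition~\ref{unique1} all elements of $A$ share a unique name, so $A\subseteq \Inst(z)$ for a single schema vertex $z$. Since $A\subseteq \Imm(\I)$ and $\{\Inst(v)\}_v$ partitions $\Imm(\I)$, some $f_0\in \I\subseteq \BI(\I)$ satisfies $a\in f_0(z)$. For every $h\in \BI(\I)$ with $z\in \dom(h)$, the projection $\Pi_{\{z\}}(h)$ is again in $\BI(\I)$, has $\dom=\{z\}$ (trivially weakly-connected), and image $h(z)\subseteq \Inst(z)$; since $\Inst(z)$ is finite, only finitely many distinct such images can arise.

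Next I would split these projections into two families according to whether the image contains $a$, and write $\mathcal{M}^+$ and $\mathcal{M}^-$ for the corresponding collections of images. I would then let $g^+$ be the iterated product (via $\otimes$) of the projections whose image lies in $\mathcal{M}^+$: all factors share domain $\{z\}$ and every running intersection still contains $a$, so every product step is defined and $g^+(z)=\bigcap \mathcal{M}^+$. If $\mathcal{M}^-=\emptyset$ I would set $f:=g^+$; otherwise I would take $g^-$ to be the iterated sum (via $\oplus$) of the projections whose image lies in $\mathcal{M}^-$ — well-defined because they share domain $\{z\}$, yielding $g^-(z)=\bigcup \mathcal{M}^-$, which does not contain $a$ — and put $f:=g^+\ominus g^-$. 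The difference is defined since both operands have domain $\{z\}$ and $a\in g^+(z)\setminus g^-(z)$ witnesses non-emptiness. Hence $f\in \BI(\I)$ and $\Imm(f)=f(z)=(\bigcap \mathcal{M}^+)\setminus(\bigcup \mathcal{M}^-)$.

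To conclude I would show $\Imm(f)=A$. Because $\{\Inst(v)\}_v$ partitions $\Imm(\I)$, for every $b\in \Imm(\I)$ and every $h\in \BI(\I)$ we have $b\in \Imm(h)$ iff $z\in \dom(h)$ and $b\in h(z)$ (when $b\in \Inst(z)$); when $b\notin \Inst(z)$ the element $b$ is automatically outside $\Imm(f)\subseteq \Inst(z)$. Therefore $b R_{\I} a$ is equivalent to the statement that $b$ lies in every member of $\mathcal{M}^+$ and in no member of $\mathcal{M}^-$, which is precisely the defining condition of $\Imm(f)$. Thus $\Imm(f)=A$, as required.

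The main obstacle is confirming the well-definedness of the chain of algebraic operations: the iterated products must not collapse to the empty intersection, the sum and the difference must be defined, and every intermediate domain must remain weakly-connected. All of these reduce to the single observation that $a$ survives every step of the $g^+$ construction and never appears in $g^-$, together with the triviality that the one-vertex subgraph $\{z\}$ is always weakly-connected. The finiteness needed to express the iterated operations as finitely many binary ones follows from the finiteness of $\Inst(z)$, hence of $\mathcal{M}^+$ and $\mathcal{M}^-$.
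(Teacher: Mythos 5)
Your proof is correct, and it takes a genuinely different route from the paper's. The paper argues bottom--up: it forms $g=\bigoplus_{f\in\mathcal F}f$ over the family $\mathcal F=\{f\in \BI({\mathcal \I}): \Imm(f)\subseteq A\}$ (well formed by Corollary~\ref{unique}) and then asserts $\Imm(g)=A$; the justification offered there is terse, and the dichotomy it invokes (``$\Imm(f)\subseteq A$ or $\Imm(f)\cap A=\emptyset$'') is stated backwards --- the definition of $R_{\I}$ actually yields ``$A\subseteq\Imm(f)$ or $A\cap\Imm(f)=\emptyset$'' --- so the key point that every $a\in A$ is covered by some member of $\mathcal F$ is left implicit. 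You work top--down instead: you realize $A$ as the Boolean atom $\bigl(\bigcap\mathcal M^+\bigr)\setminus\bigl(\bigcup\mathcal M^-\bigr)$ built from the single-vertex projections $\Pi_{\{z\}}(h)$ for $h\in\BI({\mathcal \I})$, checking at each step that the fixed element $a$ keeps every $\otimes$ nonempty and that the one-vertex domain stays weakly connected, and then the equivalence ``$b\,R_{\I}\,a$ iff $b$ lies in every member of $\mathcal M^+$ and in none of $\mathcal M^-$'' (which uses the disjointness of the sets $\Inst(v)$ to reduce membership in $\Imm(h)$ to membership in $h(z)$) gives $\Imm(f)=A$ exactly. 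What your version buys is precisely the ingredient the paper leaves unproved: an explicit member of $\BI({\mathcal \I})$ whose image contains a prescribed $a\in A$ and is contained in $A$. What the paper's version buys is brevity, since it never needs the difference operator $\ominus$. Both are legitimate, but yours is the more complete argument.
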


\begin{proof}
  Let $\mathcal{F}$ be the set of functions $f \in \BI({\mathcal \I})$ such that
  $\Imm(\f)\subseteq A$. By Cor.~\ref{unique} all functions in $\mathcal{F}$ have the same
  domain, therefore the expression $g=\bigoplus_{f\in\mathcal{F}} f$ is
  well-formed; by construction $\Imm(g) \subseteq A$.
  By definition of  $\PBI{\I}$ all functions $f\in\BI(\I)$ are such that $\Imm(\f)
  \subseteq A$ or $\Imm(g) \cap  A=\emptyset$, therefore all functions
  $f\in\BI(\I)$ whose image intersect $A$ are such that $\Imm(\f)\subseteq A$,
  which in turn implies that are also in $\mathcal{F}$.
  Hence  $\Imm (g)=A$, for otherwise there would be an element of $A$ not
  belonging to the image of any function in $\I$.
\end{proof}

% The previous result can be generalized by the Lemma~\ref{gen1}.

\begin{Lem}
  \label{uni2}
  Let $\I$ be a set of instances over $B$, let $P^{\BI}_{\mathcal \I}$
  be the partition induced by ${\I}$ and let $A$ be a union of sets in
  $\PBI{\I}$, such that the inverse image of $A$ induces a weakly-connected
  subgraph of the schema, then $A$  is the image of
  an instance
  $\f \in \BI({\mathcal \I})$.
\end{Lem}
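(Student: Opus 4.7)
The plan is to write $A$ as a disjoint union $A = A_1 \cup \cdots \cup A_m$ of classes of $\PBI{\I}$, realize each class as the image of an instance via Lemma~\ref{minimo_ottenibile}, and then glue these instances together using the Product operation of the graph algebra.

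First, I apply Lemma~\ref{minimo_ottenibile} to each $A_i$ to obtain $\f_i \in \BI({\I})$ with $\Imm(\f_i) = A_i$. By Corollary~\ref{unique}, $\names(A_i)$ consists of a single name $x_i$, so $A_i \subseteq \Inst(x_i)$. Since $\f_i$ is a restriction of $\Inst$ and the sets $\Inst(x)$ partition the structure's vertices, any nonempty $\f_i(y)$ is contained in $\Inst(y)$; combined with $\Imm(\f_i) \subseteq \Inst(x_i)$ this forces $\dom(\f_i) = \{x_i\}$ and $\f_i(x_i) = A_i$. In particular the names $x_1, \ldots, x_m$ are pairwise distinct.

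Next, I iteratively build products: set $g_1 = \f_1$ and $g_i = g_{i-1} \otimes \f_i$ for $i \geq 2$. Because the $x_j$ are distinct, Definition~\ref{prodotto} evaluates at each step to $(g_{i-1} \otimes \f_i)(x_j) = g_{i-1}(x_j) = A_j$ for $j < i$ and $(g_{i-1} \otimes \f_i)(x_i) = \f_i(x_i) = A_i$; the intersection case of the Product never triggers. Hence, provided each Product is well-defined, $g_m$ lies in $\BI(\I)$ and $\Imm(g_m) = A_1 \cup \cdots \cup A_m = A$.

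The main obstacle is ensuring every intermediate Product is admissible, since Definition~\ref{prodotto} requires $\dom(g_{i-1} \otimes \f_i) = \{x_1, \ldots, x_i\}$ to induce a weakly-connected subgraph of $\Sigma$. Here I exploit the hypothesis that $\names(A) = \{x_1, \ldots, x_m\}$ induces a weakly-connected subgraph of the schema: I order the $x_i$ via any spanning-tree traversal (e.g., BFS) of this subgraph so that every prefix $\{x_1, \ldots, x_i\}$ itself induces a weakly-connected subgraph. Under this ordering each Product is defined, producing the required $g_m \in \BI({\I})$ with $\Imm(g_m) = A$.
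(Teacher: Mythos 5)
Your overall strategy is the paper's: decompose $A$ into classes of $\PBI{\I}$, realize each class as the image of an instance via Lemma~\ref{minimo_ottenibile}, and glue with the algebra's operations. But there is a genuine gap at the step where you assert that the names $x_1,\ldots,x_m$ are pairwise distinct. Corollary~\ref{unique} only says that \emph{each} class $A_i$ has a singleton inverse image $\{x_i\}$; it does not prevent two distinct classes $A_i\neq A_j$ from sharing the same name, i.e.\ $A_i, A_j\subseteq \Inst(x)$ for the same schema vertex $x$. This happens precisely when some expressible instance contains part but not all of $\Inst(x)$ in its image, which is the typical situation. In that case your iterated Product does hit the intersection branch of Definition~\ref{prodotto}: at the step where $\f_i$ is multiplied in, $g_{i-1}(x)\cap \f_i(x)$ is an intersection of disjoint classes, hence empty, so the product becomes undefined at $x$ and both $A_i$ and the earlier classes with name $x$ drop out of the image (and the domain may even become disconnected). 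Your construction therefore fails to produce an instance with image $A$ whenever $\names(A)$ has fewer elements than the number of classes composing $A$.

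The paper avoids this by inserting an Addition step that you omitted: for each name $x$ in the inverse image of $A$, it first forms $g_x=\bigoplus_{\dom(\f_i)=\{x\}}\f_i$ (legitimate, since all these $\f_i$ have the common domain $\{x\}$, so $g_x(x)$ is the union of all classes of $A$ lying in $\Inst(x)$), and only then takes $\bigotimes_x g_x$ over the now genuinely distinct names. With that correction your argument goes through; indeed your remark about ordering the names along a spanning-tree traversal of the weakly-connected subgraph induced by $\names(A)$, so that every intermediate Product has a weakly-connected domain, is a point of care that the paper's own proof glosses over with ``it is immediate to note.''
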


\begin{proof}
  Let  $A_1,\cdots,A_n$ be the sets of $\PBI{\I}$ whose union is $A$, and notice
  that, by Lemma~\ref{minimo_ottenibile}, it is possible to associate to each set
  $A_i$ the instance $\f_i \in \BI(\I)$ whose image is $A_i$, moreover for each such $\f_i$,
  $|Dom(\f_i)|=1$.
  For each vertex $x$ in the inverse image of $A$ we can construct the function
  $g_x$ as $\bigoplus_{Dom(f_i) = \{x\}}  f_i$.
  Then let $g= \bigotimes g_x$; it is immediate to note that $\Imm (g)=A$.
\end{proof}

An immediate consequence of Lemmata~\ref{f_1} and~\ref{uni2} is the following:
\begin{Cor}
  \label{canonical=obtainable}
  Let $\I$ be a set of instances over $B$ and let $f\in\I$. Then $f\in \BI(\I)$ if and
  only if $\Imm(f)$ is union of sets in $\PBI{\I}$ and the inverse image of
  $\Imm(f)$ induces a weakly-connected subgraph of the schema.
\end{Cor}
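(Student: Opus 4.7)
The plan is to verify both implications, with the ``if'' direction furnished directly by Lem.~\ref{uni2} and the ``only if'' direction reducing to the definition of $R_\I$ together with basic facts about instances. In effect, the content of the corollary is already packaged in the preceding results; one only has to assemble the pieces.

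For the forward direction, assume $f \in \BI({\mathcal \I})$. That $\Imm(f)$ is a union of classes of $\PBI{\I}$ is built into the definition of $R_\I$: any two $R_\I$-equivalent values agree on membership in $\Imm(g)$ for every $g \in \BI({\mathcal \I})$, hence in particular for $f$, so if $\Imm(f)$ meets a class it must contain it. For the weak-connectedness condition, recall that an instance is by Def.~\ref{morfi} a restriction of $\Inst$ whose domain $\Dom(f)$ induces a weakly-connected subgraph of $\Sigma$; Cor.~\ref{unique} (each class of $\PBI{\I}$ sits inside the image of a single name) then identifies the inverse image of $\Imm(f)$ with $\Dom(f)$, yielding exactly the stated weak-connectedness.

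For the backward direction, suppose $\Imm(f)$ is a union of sets of $\PBI{\I}$ and that its inverse image induces a weakly-connected subgraph of $\Sigma$. Lem.~\ref{uni2} then directly produces an instance $g \in \BI({\mathcal \I})$ with $\Imm(g) = \Imm(f)$. The one step not entirely automatic is to argue $g = f$ rather than merely a function sharing its image with $f$. This follows from the observation that the image of an instance of $B$ determines the instance: since every value $y$ has a unique name (Cor.~\ref{unique}) and each instance is a restriction of $\Inst$, we necessarily have $f(x) = \Inst(x) \cap \Imm(f)$ for every $x \in \Dom(f)$, and the analogous identity holds for $g$; so $\Imm(g) = \Imm(f)$ forces $\Dom(g) = \Dom(f)$ and then $g = f$, hence $f \in \BI({\mathcal \I})$.

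The argument is essentially bookkeeping, as the authors' wording (``an immediate consequence'') suggests. The only mildly substantive step is the image-determines-instance observation used in the backward direction, which however is an immediate consequence of Cor.~\ref{unique} together with the fact that instances are restrictions of the fixed function $\Inst$.
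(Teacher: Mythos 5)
Your proof is correct and follows the same assembly the paper intends: the forward direction from the definition of $R_\I$ (equivalently Prop.~\ref{f_1}) plus the fact that $\Dom(f)$ induces a weakly-connected subgraph, and the backward direction from Lemma~\ref{uni2}. The one point the paper leaves implicit --- that the instance $g$ produced by Lemma~\ref{uni2} must actually equal $f$ --- you fill in correctly via $f(x)=\Inst(x)\cap \Imm(f)$, which holds because the sets $\Inst(v)$ partition $V'$ and every instance is a restriction of $\Inst$.
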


With Lemma~\ref{uni2} we have proved that all interesting unions of sets of the
partition  $P^{\BI}_{\mathcal \I}$ can be obtained with an expression of the
graph algebra where all operands are taken from  ${\mathcal \I}$, therefore
$P^{\BI}_{\mathcal \I}$ conveys all expressibility information. But
$P^{\BI}_{\mathcal \I}$ is defined on the set $\BI({\mathcal \I})$, we still
need to correlate the definition of canonical partition with that of
$P^{\BI}_{\mathcal \I}$.

\begin{Lem}
  \label{Pbi_accettabile}
  Let  ${\mathcal \I}$ be a set of instances. Then $P^{\BI}_{\mathcal \I}$
  is a valid partition of ${\mathcal \I}$.
\end{Lem}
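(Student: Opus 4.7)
The plan is to verify Def.~\ref{partizione_accettabile} for $\PBI{\I}$: every class $A=P_i$ must be 0-stable with respect to $\I$ and $k$-stable with respect to each union $B=\bigcup_{l\in L,\,l\neq i}P_l$ of other classes, for all $k\in\mathbb{N}$. For 0-stability, Corollary~\ref{unique} yields that the inverse image of $A$ under $\Inst$ is a single schema name $z_A$, so $A\subseteq \f(z_A)$ for any $\f\in\I$ with $z_A\in\dom(\f)$, discharging the first clause of Def.~\ref{0-stable}. For the second clause, the defining equivalence $R_\I$ applied to each $\f\in\I\subseteq\BI(\I)$ forces $A$ to lie entirely inside $\Imm(\f)$ or entirely disjoint from it, according to whether some element of $A$ belongs to $\Imm(\f)$.

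For $k$-stability I induct on $k$, using at each step the selection operation applied to a function $h\in\BI(\I)$ with $\Imm(h)=A\cup B$, which is supplied by Lemma~\ref{uni2} (restricted to a weakly-connected component of the inverse image of $A\cup B$ when required, then reassembled via Lemma~\ref{prop1_rel}(1)). Given $a_1,a_2\in A$, $b_1\in B_j\subseteq B$ and a path schema $\sigma\in\AM_{k,A\cup B,\I}(a_1,b_1)$, I translate $\sigma$ into a selector $(G_{s,\sigma},s_\sigma)\subseteq\Sigma$ and construct $h\,|\,(G_{s,\sigma},s_\sigma)\in\BI(\I)$; projecting onto $z_A$ produces a $\BI(\I)$-function whose image is exactly the set of $a\in A$ from which a $\sigma$-path starts entirely inside $A\cup B$. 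The equivalence $R_\I$ transports $a_1$ to $a_2$ in this image, exhibiting a $\sigma$-realization from $a_2$ into $B_j$; the dual projection onto $z_{B_j}$ shows every element of $B_j$ is the endpoint of some $\sigma$-path coming from $A$.

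The main obstacle is that Def.~\ref{k_passi}(3) requires a \emph{single} $b_2\in B$ witnessing the inclusion $\AM_{k,A\cup B,\I}(a_1,b_1)\subseteq\AM_{k,A\cup B,\I}(a_2,b_2)$ simultaneously for every $\sigma$, not merely a schema-dependent endpoint. I plan to handle this by packaging the finitely many individual selectors into one weakly-connected subgraph $G^*_s\subseteq\Sigma$ that branches out of $z_A$ and re-merges at $z_{B_j}$, aligning any shared intermediate schema vertices through the inductive hypothesis of $(k-1)$-stability applied to the corresponding subpaths. A simple instance of $G^*_s$ confined to $A\cup B$ encodes a single pair $(a,b)$ realizing every $\sigma\in\AM_{k,A\cup B,\I}(a_1,b_1)$ simultaneously, and $(a_1,b_1)$ is itself such a simple instance by construction; applying $R_\I$ to the corresponding projections in $\BI(\I)$ then provides a common $b_2\in B_j$ for $a_2$. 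Combined with 0-stability and the inductive hypothesis, this yields $\stronguniform{A}{B}{k\I}$ for every $k$, hence $\stronguniform{A}{B}{\I}$, and since $A$ and $B$ were arbitrary, $\PBI{\I}$ is a valid partition.
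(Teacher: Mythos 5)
Your overall strategy is built from the same ingredients as the paper's proof: 0-stability follows because $\I\subseteq\BI(\I)$ and the defining relation $R_\I$ forces each class to meet the image of every $f\in\BI(\I)$ in all or nothing (plus Corollary~\ref{unique} for clause 1 of Definition~\ref{0-stable}); for the path conditions, both you and the paper take a function $g\in\BI(\I)$ with $\Imm(g)=A\cup B$ from Lemma~\ref{uni2}, select by selectors derived from path schemata, project onto the name of $A$, and invoke $R_\I$ to transport membership from $a_1$ to $a_2$. That you argue directly by induction on $k$ while the paper argues by contradiction from a minimal failing $B$ is a cosmetic difference. The genuine problem is the step you yourself flag and then try to patch: producing a \emph{single} $b_2$ with $\AM_{k,A\cup B,\I}(a_1,b_1)\subseteq\AM_{k,A\cup B,\I}(a_2,b_2)$. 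Your fix --- packaging the finitely many path schemata into one selector $G^*_s$ that ``branches out of $z_A$ and re-merges at $z_{B_j}$'' --- is not available in this model. A selector is by definition a weakly-connected \emph{subgraph of the fixed schema} $\Sigma$ with one color per edge: you cannot introduce fresh branching vertices, and two path schemata that traverse the same schema edge with different colors cannot be imposed by a single coloring $\sigma^*$. Even when the union of the walks does happen to be a consistently colored subgraph of $\Sigma$, a simple instance of $G^*_s$ assigns \emph{one} structure vertex to each schema vertex, so membership of $a$ in the projected image asserts that all the walks are realized from $a$ through a \emph{common} choice of intermediate structure vertices. The pair $(a_1,b_1)$ need not admit such a common realization even though $a_1$ realizes each schema separately (the separate realizations may route through different elements of the intermediate classes), so the claim that ``$(a_1,b_1)$ is itself such a simple instance by construction'' is false. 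Appealing to $(k-1)$-stability to ``align shared intermediate schema vertices'' does not repair this: $(k-1)$-stability is a relation between $A$ and $B$, and says nothing about simultaneous realizability of several paths through common witnesses.

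For context, the paper's own proof is terse at exactly the same spot: it forms $h=\bigoplus_{G_s\in\AM_{A\cup B,\I}(x,z)}\Pi_a(g|G_s)$ and asserts $y\notin\Imm(h)$, which requires more than the bare negation of condition 3 of Definition~\ref{k_passi} (that negation only supplies, for each $v\in B$, \emph{some} schema in $\AM_{k,A\cup B,\I}(x,z)\setminus\AM_{k,A\cup B,\I}(y,v)$, not that $y$ realizes no schema of $\AM_{k,A\cup B,\I}(x,z)$ at all). So you have correctly located the delicate point of the lemma, but your proposed resolution does not go through as stated, and the proof remains incomplete there.
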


\begin{proof}
  Let $A,B$ be two disjoint sets where $A\in \PBI{\I}$ and $B$ is union of sets in
  $\PBI{\I}$, we will prove that $\stronguniform{A}{B}{\I}$.
  First of all we will show that $A$ is 0-stable. Remember that, by definition of
  $\PBI{\I}$, for each $f\in \BI(\I)$ either $\Imm(f)\supseteq A$ or  $\Imm(f)\cap
  A=\emptyset$. Since $\BI(\I)$ contains $\I$, it is immediate to not that $A$ is
  0-stable. In the following let $a$ be the single-vertex inverse image of $A$.

  If the inverse image of $A\cup B$ does not induce a weakly-connected subgraph of
  the instance, then 0-stability of $A$ suffices to prove that
  $\stronguniform{A}{B}{\I}$, therefore assume that the inverse image of $A\cup B$
  induces a weakly-connected subgraph of the instance.
  Let us assume that
  $\stronguniform{A}{B}{\I}$ does not hold, then we will get a contradiction. Without loss of
  generality we can assume that $B$  is a minimum set for which
  $\stronguniform{A}{B}{\I}$ does not hold.
  The new assumption implies that $k$-stability does not hold for some $k$. It
  follows that
  there are two elements $x, y \in A$  and an element $z \in B$ such that
  $\AM_{A\cup B,{\mathcal \I}}(x,z)$ is not contained in $\AM_{A \cup B, {\mathcal \I}}(y,v)$, for every
  $v \in B$.
  Now, by Lemma~\ref{uni2}, there is an instance
  $\g \in \BI({\mathcal \I})$ such that $\Imm(\g) = A \cup B$.
  Let $\name(x)=x_1$ and $\name(z)=z_1$ and let $v \in B$ such that $\name(v)=z_1$.
  Let us consider the function
  $h=\bigoplus_{G_s\in\AM_{A \cup B, {\mathcal \I}}(x,z)} \left( \Pi_a (g|G_s) \right)$. By
  construction $x$ belongs to $\Imm(h)$, but $y$ does not; since $h\in \BI(\I)$ we
  have found a function in $\BI(\I)$ containing $x$ but not $y$, contradicting the
  assumption that $x R_I y$.
\end{proof}

\begin{Lem}
  \label{Pd=Pbi}
  Let  ${\mathcal \I}$ be a set of instances.
  Then $P_{\mathcal \I}^{\BI}={\mathcal
    C}_{\mathcal \I}$.
\end{Lem}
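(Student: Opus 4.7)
The plan is to prove the two inclusions between partitions separately, using the two directions that have already been set up in the excerpt. By definition of the canonical partition $\mathcal{C}_{\mathcal{I}}$ (the coarsest valid partition) and by Lemma~\ref{Pbi_accettabile} ($P^{BI}_{\mathcal{I}}$ is a valid partition), we already get one direction almost for free: $P^{BI}_{\mathcal{I}}$ must be a refinement of $\mathcal{C}_{\mathcal{I}}$, so every class of $P^{BI}_{\mathcal{I}}$ is contained in some class of $\mathcal{C}_{\mathcal{I}}$.

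For the reverse direction, I would argue that every class of $\mathcal{C}_{\mathcal{I}}$ is contained in some class of $P^{BI}_{\mathcal{I}}$. Take any two elements $x, y$ that lie in the same class of $\mathcal{C}_{\mathcal{I}}$; I need to show $x R_{\mathcal{I}} y$, i.e., that for every $f \in BI(\mathcal{I})$ we have $x \in \Imm(f) \iff y \in \Imm(f)$. This is exactly what Proposition~\ref{f_1} gives, applied to the valid partition $\mathcal{C}_{\mathcal{I}}$: the image of every $f \in BI(\mathcal{I})$ is a union of classes of $\mathcal{C}_{\mathcal{I}}$. Hence any class of $\mathcal{C}_{\mathcal{I}}$ is either fully contained in $\Imm(f)$ or disjoint from it, so $x$ and $y$ are both in $\Imm(f)$ or neither is. This shows $x R_{\mathcal{I}} y$, so $x$ and $y$ share a class in $P^{BI}_{\mathcal{I}}$.

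Combining the two refinements, each class of $P^{BI}_{\mathcal{I}}$ equals a class of $\mathcal{C}_{\mathcal{I}}$, i.e., $P^{BI}_{\mathcal{I}} = \mathcal{C}_{\mathcal{I}}$. There is no real obstacle here once Lemma~\ref{Pbi_accettabile} and Proposition~\ref{f_1} are in hand; the only thing to be slightly careful about is that the coarsest valid partition is unique (already established by the theorem preceding Section~5) and that the ordering $\leq$ used on partitions is the refinement order defined just before Theorem~\ref{teo:scheme-with-cycles}, so antisymmetry of $\leq$ can be invoked to pass from mutual refinement to equality.
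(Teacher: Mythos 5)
Your proof is correct and follows essentially the same route as the paper: one direction from Lemma~\ref{Pbi_accettabile} together with the fact that $\mathcal{C}_{\mathcal{I}}$ is the coarsest valid partition, the other from Proposition~\ref{f_1} applied to the valid partition $\mathcal{C}_{\mathcal{I}}$. The only (harmless) difference is that the paper reaches the second direction via Lemma~\ref{minimo_ottenibile}, realizing each class of $P^{BI}_{\mathcal{I}}$ as the image of some instance in $\BI(\mathcal{I})$, whereas you argue directly from the definition of $R_{\mathcal{I}}$ that no class of $\mathcal{C}_{\mathcal{I}}$ can be split by any $\Imm(f)$, which slightly shortens the dependency chain.
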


\begin{proof}
  By Lemma~\ref{Pbi_accettabile},  $P_{\I}^{\BI}$  must be
  a valid partition of ${\mathcal \I}$.
  By Lemma~\ref{minimo_ottenibile}, every set $A \in P_{\mathcal \I}^{\BI}$
  is obtained as the image of some instance $\f \in \BI({\mathcal \I})$.
  Clearly, since  $\f \in \BI({\mathcal \I})$,
  by Lemma~\ref{f_1} the image $A$
  of $\f$ is the union of sets of the canonical
  partition of ${\mathcal \I}$.
  Hence
  $\PBI{\I} = {\mathcal C}_{\I}$.
\end{proof}

\begin{Thm}
  \label{uguale1}
  Let  ${\mathcal \I}$ be a set of instances.
  Then an  instance $\g$ belongs to  $\BI (\I)$ if and only if $\PBI{\I}=\PBI{\I\cup \{ g\}}$.
\end{Thm}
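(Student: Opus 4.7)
The strategy is to reduce the theorem to the characterization already provided by Corollary~\ref{canonical=obtainable}, which fully captures when an instance is in $\BI(\I)$ in terms of the partition $\PBI{\I}$ and the connectedness of its inverse image in the schema. The two directions of the biconditional are handled separately, each in just a few steps.

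For the forward direction, assume $g \in \BI(\I)$. I would first show that $\BI(\I) = \BI(\I \cup \{g\})$. The inclusion $\BI(\I) \subseteq \BI(\I \cup \{g\})$ is immediate, since every expression over $\I$ is also an expression over $\I \cup \{g\}$. For the converse, take any expression $E$ over $\I \cup \{g\}$ and replace each occurrence of $g$ with a fixed expression $E_g$ over $\I$ that satisfies $M_{E_g}(\I) = g$; the rewritten expression computes the same intermediate and final instances, so the resulting function lies in $\BI(\I)$. Once $\BI(\I) = \BI(\I \cup \{g\})$ is established, the equivalence relation $R_{\I}$ used to define $\PBI{\I}$ coincides with $R_{\I \cup \{g\}}$, since membership of a value in $\Imm(f)$ is quantified over the same set of instances; hence $\PBI{\I} = \PBI{\I \cup \{g\}}$.

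For the reverse direction, assume $\PBI{\I} = \PBI{\I \cup \{g\}}$. Trivially $g \in \BI(\I \cup \{g\})$ as one of the operands. Applying Corollary~\ref{canonical=obtainable} to $g$ with respect to $\I \cup \{g\}$ yields that $\Imm(g)$ is a union of classes of $\PBI{\I \cup \{g\}}$ and that the inverse image of $\Imm(g)$ induces a weakly-connected subgraph of the schema (the latter is automatic, since $g$ is by hypothesis an instance). By the assumed equality of the two partitions, $\Imm(g)$ is then also a union of classes of $\PBI{\I}$. A second application of Corollary~\ref{canonical=obtainable}, this time with respect to $\I$, delivers $g \in \BI(\I)$.

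The main obstacle, modest as it is, lies in the substitution step of the forward direction, because several operations of the graph algebra (product and difference) are only defined when the resulting domain induces a weakly-connected subgraph of $\Sigma$. One has to verify that the textual substitution of $E_g$ for $g$ preserves definedness, which follows from the observation that at every node of the evaluation tree the intermediate instance computed is identical before and after the substitution. Every other step of the argument is a direct invocation of the machinery already developed, in particular Lemmata~\ref{f_1} and~\ref{uni2} and Corollary~\ref{canonical=obtainable}, so the theorem is essentially a clean bookkeeping consequence of that corollary rather than a site of new combinatorial work.
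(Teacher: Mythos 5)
Your proof is correct and, in substance, travels the same road as the paper's: the forward direction is the substitution argument the paper dismisses as ``immediate by construction of $\PBI{\I}$,'' and your reverse direction invokes Corollary~\ref{canonical=obtainable}, which is just the packaged form of the machinery the paper unwinds explicitly (Lemma~\ref{f_1} to see that each $\g(x)$ is a union of classes, Lemma~\ref{minimo_ottenibile} to realize each class as the image of an instance in $\BI({\mathcal \I})$, and a reassembly of $\g$ via $\oplus$ and $\otimes$). The one point where your packaging is logically thinner is the last step of the reverse direction: the ``if'' half of Corollary~\ref{canonical=obtainable} rests on Lemma~\ref{uni2}, which only guarantees that \emph{some} instance of $\BI({\mathcal \I})$ has image $\Imm(\g)$, so to land on $\g\in\BI({\mathcal \I})$ you are implicitly using the fact that an instance is determined by its image --- true here because condition~1 of Definition~\ref{def_f1} forces $\g(x)=\Imm(\g)\cap\Inst(x)$ and (under the nonemptiness convention) $\dom(\g)=\names(\Imm(\g))$ --- whereas the paper's name-by-name reconstruction produces the function $\g$ itself and never needs this remark. (You are also reading Corollary~\ref{canonical=obtainable} for arbitrary instances, whereas its printed hypothesis is $f\in{\mathcal \I}$; that is clearly a typo in the paper, but worth noting since your whole reverse direction hangs on that reading.) Net effect: your route is shorter, the paper's makes the reconstructed expression visibly equal to $\g$ as a function rather than merely equal in image.
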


\begin{proof}
  Clearly by Lemma~\ref{Pd=Pbi} it suffices to show that $\g \in \BI({\mathcal \I})$
  if and only if  $P_{\mathcal \I}^{\BI} = P_{{\mathcal \I}\cup \{ g\}}^{\BI}$,
  moreover it is immediate to note that, by construction of $\PBI{\I}$, if $\g \in \BI({\mathcal \I})$ then
  $P_{\mathcal \I}^{\BI} = P_{{\mathcal \I}\cup \{ g\}}^{\BI}$.
  Assume now that $P_{\mathcal \I}^{\BI} = P_{{\mathcal \I}\cup \{ g\}}^{\BI}$.
  By Lemma~\ref{Pbi_accettabile},  $P_{{\mathcal \I}\cup \{
    g\}}^{\BI}$ must be a valid partition.
  By Lemma~\ref{f_1}, and exploiting the assumption that $\PBI{\I}=\PBI{\I\cup \{
    g\}}$, for each $x \in \dom(\g)$,  $\g(x)$
  must be the union of some sets  in $P_{{\mathcal \I}\cup \{ g\}}^{\BI}$, that is
  $\Imm(\g(x))= \cup A_j$, for some sets $A_j \in P_{\mathcal \I}\cup \{ g\}^{\BI}$.
  But by Lemma~\ref{minimo_ottenibile}, for each set $A_j$ there is
  an instance $\f_j \in \BI({\mathcal \I})$ such that $A_j$ is the image of $\f_j$.
  Consequently, $\g(x) = \oplus \f_i$, and hence $\g = \otimes_{ x \in \dom(\g)}\g(x)$,
  which proves that $\g \in \BI({\mathcal \I})$ as required.
\end{proof}

Theorem~\ref{uguale1} and Lemma~\ref{Pd=Pbi} lead to our main result.

\begin{Cor}
  \label{cor:uguale1}
  Let  ${\mathcal \I}$ be a set of instances.
  Then an  instance $\g$ belongs to  $\BI (\I)$ if and only if
  $\mathcal{C}_{\I}=\mathcal{C}_{\I\cup \{ g\}}$.
\end{Cor}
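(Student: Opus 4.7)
The plan is to derive Corollary~\ref{cor:uguale1} as an immediate consequence of Theorem~\ref{uguale1} combined with Lemma~\ref{Pd=Pbi}. The substantive work has already been carried out: Theorem~\ref{uguale1} characterizes membership in $\BI(\I)$ via the equality of the combinatorial partitions $\PBI{\I}$ and $\PBI{\I \cup \{g\}}$, while Lemma~\ref{Pd=Pbi} identifies the partition $\PBI{\cdot}$ with the canonical (stability-based) partition $\mathcal{C}_{\cdot}$. All that remains is to glue these together.

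Concretely, I would first invoke Lemma~\ref{Pd=Pbi} twice. Applied to $\I$ it yields $\PBI{\I} = \mathcal{C}_{\I}$. Applied to the enlarged family $\I \cup \{g\}$ — which is a legitimate set of instances over the same data base $B$, since $g$ is assumed to be an instance — it yields $\PBI{\I \cup \{g\}} = \mathcal{C}_{\I \cup \{g\}}$. I would then substitute these two equalities into the biconditional of Theorem~\ref{uguale1}, $g \in \BI(\I) \iff \PBI{\I} = \PBI{\I \cup \{g\}}$, so that the right-hand condition becomes $\mathcal{C}_{\I} = \mathcal{C}_{\I \cup \{g\}}$; this is precisely the statement to be proved.

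The main ``obstacle'' is really a non-obstacle: there is no additional technical content beyond the substitution above. The only point I would verify carefully is that Lemma~\ref{Pd=Pbi} is stated uniformly enough in its argument to apply verbatim to $\I \cup \{g\}$ — which it is, since the lemma holds for an arbitrary set of instances. In particular, the corollary derives its force entirely from the earlier bridge lemma~\ref{Pd=Pbi}, which is where the real interplay between expressibility (captured by $\PBI{\I}$) and undifferentiation (captured by $\mathcal{C}_{\I}$) is established.
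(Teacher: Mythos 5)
Your proposal matches the paper exactly: the paper states that Theorem~\ref{uguale1} and Lemma~\ref{Pd=Pbi} together yield the corollary, which is precisely your substitution of $\PBI{\I}=\mathcal{C}_{\I}$ and $\PBI{\I\cup\{g\}}=\mathcal{C}_{\I\cup\{g\}}$ into the biconditional of Theorem~\ref{uguale1}. Your added care in checking that Lemma~\ref{Pd=Pbi} applies uniformly to the enlarged family $\I\cup\{g\}$ is a reasonable (if implicit in the paper) verification.
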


\section{Conclusions}

We have introduced the idea that partitions of the domain set can be used for
characterizing the set of relations or graphs that can be extracted in a data
base in the relational or in a graph-based model.
By formally proving those expressiveness results we have effectively given a new
framework for the analysis of data base query languages.

The graph-based model presented here is not rich enough to be considered of
practical use, therefore it
would be interesting to use our framework for analyzing a more sofisticated
graph-based model.

% \section{Acknowledgements}

% We would like to thank Marina Cazzola for suggesting us Example
% \ref{ex:Klein-group}.

% \bibliography{books,database}
% \bibliographystyle{abbrv}

\end{document}